\documentclass{amsart}
\usepackage{amsmath}
\usepackage{amsfonts}
\usepackage{amssymb}
\usepackage{amsthm}
\usepackage{newlfont}
\usepackage{graphicx}
\usepackage{amscd}
\usepackage{bbm}

\textwidth 6.25in
\textheight 9in
\topmargin -0.3cm
\leftmargin -3cm
\oddsidemargin=0cm
\evensidemargin=0cm
\hfuzz5pt 
\theoremstyle{plain}
\newtheorem{Th}{Theorem}[section]
\newtheorem{Cor}[Th]{Corollary}

\newtheorem{Prop}[Th]{Proposition}
\theoremstyle{definition}

\theoremstyle{remark}
\newtheorem*{Rem}{Remark}
\numberwithin{equation}{section}

\newcommand{\VV}{{\mathbb V}}



\newcommand{\bOm}{\boldsymbol{\Omega}}

\newcommand{\bpsi}{\boldsymbol{\psi}}
\newcommand{\bphi}{\boldsymbol{\phi}}
\newcommand{\bsigma}{\boldsymbol{\sigma}}

\newcommand{\bom}{\boldsymbol{\omega}}

\begin{document}

\title[Discrete KP equation with self-consistent sources]{Discrete KP equation with self-consistent sources}

\author{Adam Doliwa}

\address{A. Doliwa: Faculty of Mathematics and Computer Science, University of Warmia and Mazury in Olsztyn,
ul.~S{\l}oneczna~54, 10-710~Olsztyn, Poland}
\email{doliwa@matman.uwm.edu.pl}
\urladdr{http://wmii.uwm.edu.pl/~doliwa/}

\author{Runliang Lin}

\address{R. Lin: Department of Mathematical Sciences, Tsinghua University, Beijing 100084, P.~R.~China}

\email{rlin@math.tsinghua.edu.cn}

%
\keywords{integrable systems with self-consistent sources; Kadomtsev--Petviashvili hierarchy; Darboux transformations; Hirota equation; multidimensional consistency\\
MSC 2010: Primary 37K10, 37K60; Secondary 37K35, 39A14\\
PACS 2010: 02.30.Ik, 05.45.Yv, 47.35.Fg}

\begin{abstract}
We show that the discrete Kadomtsev--Petviashvili (KP) equation with sources obtained recently by the ``source generalization" method can be incorporated into the squared eigenfunction symmetry extension procedure. Moreover, using the known correspondence between Darboux-type transformations and additional independent variables, we demonstrate that the equation with sources can be derived from Hirota's discrete KP equations but in a space of higher dimension. In this way we uncover the origin of the source terms as coming from multidimensional consistency of the Hirota system itself.
\end{abstract}
\maketitle

\section{Introduction}

\subsection{Background on soliton equation with self-consistent sources}

Soliton equations with self-consistent sources, 
proposed by Mel'nikov \cite{Melnikov1983},
have important applications in hydrodynamics, plasma physics and solid state physics 
(see, e.g., \cite{Doktorov-Shchesnovich-1995, Melnikov1983,Melnikov1984,Melnikov1987,Melnikov1989}).
It is known that the sources may change the velocities of the solitons \cite{Lin2001,Zeng2000},
similar observation can be found in \cite{Grinevich-Taimanov-2008,Melnikov1989capture}.
Later, some explicit solutions (such as, solitons, positons, negatons)
of some soliton equations with sources were obtained by using Darboux transformations 
(see the references in \cite{Lin2006}) and the Hirota method \cite{Zhang2003}.

Given an integrable system, its version with self-consistent sources is not unique. The most known such generalization of the
Kadomtsev--Petviashvili (KP) equation is \cite{Melnikov1983}
\begin{gather}
  \label{eqns:KPSCS}
    (4u_{,t}-12uu_{,x}-u_{,xxx})_{,x}-3u_{,yy}+4 ( \boldsymbol{r}^* \boldsymbol{q} )_{,xx}=0, \\
\label{eqns:KPSCS-qr}
    \boldsymbol{q}_{,y}  =\boldsymbol{q}_{,xx}+2u\boldsymbol{q} ,\qquad  
\boldsymbol{r}^*_{,y}  =-\boldsymbol{r}^*_{,xx}-2u\boldsymbol{r}^*, 
  \end{gather}
with the column-vector function $\boldsymbol{q} = (q_j)_{j=1, \dots , K}$ and the row-vector function $\boldsymbol{r}^* = (r^*_j)_{j=1, \dots ,K}$. It describes the interaction of a long wave with a short-wave packet propagating at an angle to each other. 
However, other more complicated extensions are also known~\cite{Hu2007,Melnikov1983}, and in order to find a unified framework to study such systems a systematical method was proposed  on the base of Sato's theory \cite{Liu2008a}; see also~\cite{MarekStefan} for a similar treatment of Gel'fand--Dikii hierarchies or \cite{Krichever-sources} for derivation of equations with sources as rational reductions of KP hierarchy.

Recall that the KP hierarchy~\cite{DKJM,MiwaJimboDate} reads
  \begin{equation} \label{eq:KP-h}
    L_{,t_n}=\bigl[ L^n_+,L \bigr], \qquad \text{where} \qquad 
L=\partial + \sum_{i=1}^\infty u_i\partial^{-i}
  \end{equation}
and ``$+$'' sign in subscript part of $L^n_+$ indicates the projection to the non-negative part of $L^n$ with respect to the powers of $\partial$. It is known that it allows for a squared eigenfunction symmetry (or ``ghost flow'') \cite{Oevel1993,Aratyn1998}
  \begin{equation}
    L_{,z} = \Bigl[\sum_{j=1}^K q_j\partial^{-1} r^*_j, L \Bigr],
\end{equation}
where $\boldsymbol{q}$ and $\boldsymbol{r}^*$ are solutions of the linear problem for the KP hierarchy and its adjoint
\begin{equation} \label{eq:KP-h-qr}
    \boldsymbol{q}_{,t_n} = L^n_+(\boldsymbol{q}), \qquad \boldsymbol{r}^*_{,t_n}  = - (L^n_+)^*(\boldsymbol{r}^* ),   
  \end{equation}
correspondingly. The idea to generate the KP hierarchy with self-consistent sources 
is to modify a specific flow (say $t_k$-flow, whose modification will be denoted by $\tilde{t}_k$) by the squared eigenfunction symmetry as
\begin{equation}
     L_{, \tilde{t}_k} = \Bigl[L^k_+ + \sum_{j=1}^K q_j\partial^{-1} r^*_j, L \Bigr],
\end{equation}
keeping in equations~\eqref{eq:KP-h-qr} all flows except of the $t_k$-flow.
In particular, equations \eqref{eqns:KPSCS}-\eqref{eqns:KPSCS-qr} follow from identification: $u=u_1$, $x=t_1$, $y=t_2$, $t=\tilde{t}_3$.

This systematic method has been used to generate extensions with self-consistent sources of the
CKP \cite{Wu2008}, multicomponent KP~\cite{HLYZ}, and some other hierarchies. A generalized dressing method has been also derived for these soliton hierarchies with sources, and some soliton solutions were obtained \cite{Liu2009}. Recently, a bilinear identity for the KP hierarchy with sources and
their Hirota bilinear equations were obtained \cite{Lin2013}.

\subsection{Discrete KP hierarchy and the Hirota equation}
By replacing in the Sato approach the partial differential $\partial$ by the partial difference operator $\Delta$ one arrives ~\cite{Kupershmidt-A,Kanaga-Tamizhmani} to a differential--difference KP hierarchy, which allows~\cite{Oevel} for the squared eigenfunction symmetry and gives in consequence self-consistent sources extensions \cite{YLZ}. Analogously one can obtain $q$-deformed KP hierarchy \cite{Lin2008,Lin2010} with sources. 

A fully discrete KP hierarchy was proposed in \cite{OHTI}. In \cite{KNW} the hierarchy was obtained from the Sato-like approach, and it was confirmed (as conjectured in~\cite{Zabrodin}) that all equations of the hierarchy can be obtained from Hirota's discrete KP equations \cite{Hirota}
\begin{equation} \label{eq:H-M}
\tau_{(i)}\tau_{(jk)} - \tau_{(j)}\tau_{(ik)} + \tau_{(k)}\tau_{(ij)} = 0,
\qquad 1\leq i< j <k ,
\end{equation}
here and in all the paper we use the short-hand notation with indices in brackets meaning shifts in discrete variables, for example $
\tau_{(\pm i)}(n_1, \dots , n_i, \dots  ) = \tau(n_1, \dots , n_i \pm 1, \dots )$.
The Hirota equations \eqref{eq:H-M} have a special position among discrete integrable systems (see for example reviews \cite{Zabrodin,KNS-rev}) both on the classical and the quantum level. In particular, as was shown by Miwa~\cite{Miwa} a single Hirota equation encodes the full KP hierarchy. 
A crucial property of the Hirota equation is that the number of independent variables can be arbitrary large, and such an extension does not create inconsistency or multivaluedness. This property, called multidimensional consistency \cite{ABS,FWN-cons}, is nowadays placed at the central point of discrete integrability theory and is considered as the precise analogue of exstence of a \emph{hierarchy} of nonlinear evolution equations in the case of continuous systems. 

Recently, a ``source generalization" method was proposed in \cite{Hu2006} which is based on replacing arbitrary constants in multisoliton solutions of an integrable equation without sources by arbitrary functions of one variable, and looking then for coupled bilinear equations whose solutions are those expressions. The method was applied there to Hirota's discrete KP equation producing the following system
\begin{equation} \label{eq:Hirota-sources-123}
\tau_{(1)}\tau_{(23)} - \tau_{(2)}\tau_{(13)} + \tau_{(3)}\tau_{(12)} = \boldsymbol{\rho}^*_{(13)} 
\bsigma_{(2)}, 
\end{equation}
where the column-vector function $\bsigma = (\sigma_j)_{j=1, \dots ,K}$, and the row-vector function 
$\boldsymbol{\rho}^* = (\rho_j^*)_{j=1, \dots , K}$ satisfy 
\begin{equation} \label{eq:Hirota-sources-13}
\tau_{(3)} \bsigma_{(1)} - \tau_{(1)} \bsigma_{(3)} = \bsigma \tau_{(13)} , \qquad 
\tau_{(1)} \boldsymbol{\rho}^*_{(3)} - \tau_{(3)} \boldsymbol{\rho}^*_{(1)} = \tau \boldsymbol{\rho}^*_{(13)}. 
\end{equation}

The original motivation of the paper was to reinterpret this result from the squared eigenfunction symmetry point of view, and this is the subject of Section~\ref{sec:sources-bin}, where we use relation~\cite{Oevel} between the discrete squared eigenfunction symmetry and vectorial binary Darboux transformation. Then in Section~\ref{sec:sources-lin} we present the linear problems for equations~\eqref{eq:Hirota-sources-123} with the corresponding
binary Darboux transformation interpretation.  Finally, in Section~\ref{sec:dim-sources}, using the known meaning of Darboux-type transformations as generators of additional independent discrete variables~\cite{LeBen}, we demonstrate that after an appropriate change of independent coordinates the equation with sources becomes just a system of Hirota's discrete KP equations. The number of additional dimensions depends on the number of source functions.

\section{The binary Darboux transformation flow as the source generation procedure}
\label{sec:sources-bin}
The Hirota system provides the compatibility condition for the linear problem~\cite{DJM-II}
\begin{equation} \label{eq:lin-dKP}
\bpsi_{(i)} - \bpsi_{(j)} = \frac{\tau \tau_{(ij)}}{\tau_{(i)} \tau_{(j)}} \bpsi ,  \qquad 1\leq i < j ,
\end{equation}
or its adjoint
\begin{equation} \label{eq:lin-dKP*}
\bpsi^*_{(j)} - \bpsi^*_{(i)} = \frac{\tau \tau_{(ij)}}{\tau_{(i)} \tau_{(j)}}  \bpsi^*_{(ij)} , 
\qquad1\leq i < j .
\end{equation}
There exists~\cite{Saito-Saitoh,Saitoh-Saito} an important duality between the linear problems and the Hirota equation itself.
\begin{Cor} \label{cor:psi-phi}
The functions 
\begin{equation} \label{eq:psi-phi}
\bphi = \tau \bpsi, \qquad \bphi^* = \tau \bpsi^*
\end{equation}
satisfy the following bilinear form of the linear problem and its adjoint
\begin{align} \label{eq:lin-bil}
\tau_{(j)} \bphi_{(i)} - \tau_{(i)} \bphi_{(j)} & = \bphi \tau_{(ij)} , \qquad i<j \\
\label{eq:ad-lin-bil}
\tau_{(i)} \bphi^*_{(j)} - \tau_{(j)} \bphi^*_{(i)} & = \tau \bphi^*_{(ij)}. 
\end{align}
\end{Cor}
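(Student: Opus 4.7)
The plan is to verify both \eqref{eq:lin-bil} and \eqref{eq:ad-lin-bil} by direct substitution of the definitions $\bphi = \tau\bpsi$ and $\bphi^* = \tau\bpsi^*$, exploiting the fact that the two bilinear forms are nothing more than the original linear problems \eqref{eq:lin-dKP} and \eqref{eq:lin-dKP*} after clearing the denominator $\tau_{(i)}\tau_{(j)}$.

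For the first identity I would start from \eqref{eq:lin-dKP} and multiply both sides by $\tau_{(i)}\tau_{(j)}$, which yields
\begin{equation*}
\tau_{(i)}\tau_{(j)}\bpsi_{(i)} - \tau_{(i)}\tau_{(j)}\bpsi_{(j)} = \tau\tau_{(ij)}\bpsi.
\end{equation*}
Using $\bphi_{(i)} = \tau_{(i)}\bpsi_{(i)}$ and $\bphi_{(j)} = \tau_{(j)}\bpsi_{(j)}$ on the left-hand side, and $\tau\bpsi = \bphi$ on the right, this collapses immediately to
\begin{equation*}
\tau_{(j)}\bphi_{(i)} - \tau_{(i)}\bphi_{(j)} = \tau_{(ij)}\bphi,
\end{equation*}
which is \eqref{eq:lin-bil} (the scalar $\tau_{(ij)}$ may be moved to either side of the vector $\bphi$).

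The adjoint identity \eqref{eq:ad-lin-bil} is verified in exactly the same way: multiply \eqref{eq:lin-dKP*} by $\tau_{(i)}\tau_{(j)}$, rewrite $\tau_{(j)}\bpsi^*_{(j)} = \bphi^*_{(j)}$, $\tau_{(i)}\bpsi^*_{(i)} = \bphi^*_{(i)}$ on the left and $\tau\bpsi^*_{(ij)}$ needs the shifted relation $\bphi^*_{(ij)} = \tau_{(ij)}\bpsi^*_{(ij)}$. A little care is needed here: after multiplying, the right-hand side reads $\tau\tau_{(ij)}\bpsi^*_{(ij)} = \tau\bphi^*_{(ij)}$, which is exactly the form required by \eqref{eq:ad-lin-bil}.

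There is essentially no obstacle; the content of the corollary is just the algebraic rearrangement that trades the quotient form of the linear problem for a manifestly polynomial (``bilinear'') form, and the only thing worth checking carefully is the book-keeping of which $\tau$-shifts multiply which component on each side, especially in the adjoint case where the shift $(ij)$ sits on $\bpsi^*$ rather than on $\bpsi$.
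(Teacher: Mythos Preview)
Your argument is correct and is exactly the direct substitution the paper has in mind; the corollary is stated there without proof precisely because it is this one-line rearrangement of \eqref{eq:lin-dKP} and \eqref{eq:lin-dKP*} after clearing denominators.
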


Let us recall also~\cite{Nimmo-KP}, using notation of~\cite{Doliwa-Nieszporski}, the neccessary background on binary Darboux transformations of the Hirota system.
\begin{Th} \label{th:vect-Darb-KP}
Given the solution (column vector) $\boldsymbol{\omega}:\mathbb{Z}^{N}\to\VV$ 
of the linear system \eqref{eq:lin-dKP}, and given the solution (row vector)
$\boldsymbol{\omega}^*:\mathbb{Z}^{N}\to(\VV)^*$ of the adjoint linear system 
\eqref{eq:lin-dKP*},
construct the linear operator valued potential 
$\boldsymbol{\Omega}[\boldsymbol{\omega},\boldsymbol{\omega}^*]:
\mathbb{Z}^{N}\to \mathrm{L}(\VV)$,
defined by the system of compatible equations
\begin{equation} \label{eq:omega-p-p}
 \Delta_i \boldsymbol{\Omega}[\boldsymbol{\omega},\boldsymbol{\omega}^*] = 
\boldsymbol{\omega} \otimes \boldsymbol{\omega}^*_{(i)}, 
\qquad i = 1,\dots , N,
\end{equation} 
where $\Delta_i$ is the standard partial difference operator in direction of $n_i$.
Then (the binary Darboux transform of) the $\tau$-function 
\begin{equation}  
\tilde{\tau} = \tau \det \boldsymbol{\Omega}
[\boldsymbol{\omega},\boldsymbol{\omega}^*]  \label{eq:hat-tau-dKP}
\end{equation}
satisfies the Hirota equation~\eqref{eq:H-M} again.
\end{Th}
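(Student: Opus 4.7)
The plan is to proceed in three stages: verify compatibility of the defining system for $\bOm$, compute the shifts of $\det \bOm$, and then check the Hirota equation for $\tilde{\tau}$.

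First, I would verify that the system \eqref{eq:omega-p-p} is consistent. The integrability condition $\Delta_j \Delta_i \bOm = \Delta_i \Delta_j \bOm$ expands to
\[
(\bom_{(j)} - \bom_{(i)}) \otimes \bom^*_{(ij)} + \bom \otimes (\bom^*_{(j)} - \bom^*_{(i)}) = 0.
\]
Both differences are given explicitly by the linear problems \eqref{eq:lin-dKP} and \eqref{eq:lin-dKP*}, and each term produces $\pm \frac{\tau \tau_{(ij)}}{\tau_{(i)} \tau_{(j)}}\, \bom \otimes \bom^*_{(ij)}$, so they cancel and $\bOm$ is well defined up to an additive constant operator.

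Second, from \eqref{eq:omega-p-p} we have the rank-one update $\bOm_{(i)} = \bOm + \bom \otimes \bom^*_{(i)}$. The matrix determinant lemma yields
\[
\det \bOm_{(i)} = (\det \bOm)\,\bigl(1 + \bom^*_{(i)} \bOm^{-1} \bom\bigr),
\]
and a similar rank-two calculation (or iteration of the above) expresses $\det \bOm_{(ij)}/\det \bOm$ as a $2 \times 2$ determinant built out of scalars of the form $\bom^*_{(\cdot)} \bOm^{-1} \bom$. Substituting $\tilde{\tau}_{(i)} = \tau_{(i)} \det \bOm_{(i)}$ and $\tilde{\tau}_{(ij)} = \tau_{(ij)} \det \bOm_{(ij)}$ into \eqref{eq:H-M}, after dividing out the common factor $\tau_{(i)}\tau_{(j)}\tau_{(k)} (\det \bOm)^2$, reduces the problem to a trilinear identity in these scalars that follows from the Hirota equation for $\tau$ combined with a Jacobi--Sylvester identity on minors.

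The bookkeeping in this last algebraic step is the main obstacle. A conceptually cleaner route, which I would actually prefer, is to introduce transformed wave functions
\[
\tilde{\bphi} = \bphi - \bom\, \bOm^{-1} \bOm[\bom, \bphi^*], \qquad \tilde{\bphi}^* = \bphi^* - \bOm[\bphi, \bom^*]\, \bOm^{-1} \bom^*,
\]
where the auxiliary potentials are defined by the same rank-one rule as in \eqref{eq:omega-p-p} with $\bom$ or $\bom^*$ replaced by $\bphi$ or $\bphi^*$; their existence is ensured by the linear problems of Corollary~\ref{cor:psi-phi}. Using only those defining equations and the bilinear systems \eqref{eq:lin-bil}--\eqref{eq:ad-lin-bil}, one verifies directly that the triple $(\tilde{\tau}, \tilde{\bphi}, \tilde{\bphi}^*)$ satisfies \eqref{eq:lin-bil}--\eqref{eq:ad-lin-bil} again; the Hirota equation for $\tilde{\tau}$ then follows, via Corollary~\ref{cor:psi-phi}, as the compatibility condition of the transformed linear problems, which is exactly where the nonlinear identities among minors of $\bOm$ become automatic.
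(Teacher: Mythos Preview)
The paper does not actually prove Theorem~\ref{th:vect-Darb-KP}: it is quoted from \cite{Nimmo-KP}, and the only hint at the argument is the Remark after Corollary~\ref{cor:Om-psi-psi*} saying that the proof uses the bordered determinant formula~\eqref{eq:bord-det}. Your first two stages (compatibility of \eqref{eq:omega-p-p} and the rank-one update giving $\det\bOm_{(i)} = (\det\bOm)(1+\bom^*_{(i)}\bOm^{-1}\bom)$) are correct and are exactly what the paper records as Corollary~\ref{cor:det-Om}; the continuation via a $2\times2$ bordered determinant for $\det\bOm_{(ij)}$ and a Jacobi--Sylvester identity is the Nimmo approach the paper alludes to, so in substance your direct route matches the cited proof.

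Your ``cleaner route'', however, has a genuine gap. The transformation formulas you wrote,
\[
\tilde{\bphi} = \bphi - \bom\,\bOm^{-1}\bOm[\bom,\bphi^*], \qquad
\tilde{\bphi}^* = \bphi^* - \bOm[\bphi,\bom^*]\,\bOm^{-1}\bom^*,
\]
are not the correct Darboux-transformed wave functions: compare with the paper's Corollary~\ref{cor:Om-psi-psi*}, where the transformed \emph{solution} of the linear problem uses the potential $\bOm[\bpsi,\bom^*]$ on the left and $\bom$ on the right, and the adjoint uses $\bom^*$ on the left and $\bOm[\bom,\bpsi^*]$ on the right. In your expressions the roles of the vector and the mixed potential are swapped, and you have paired $\bphi$ with the adjoint potential $\bOm[\bom,\bphi^*]$ (and vice versa), which does not even produce an object of the right shape. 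Moreover, your justification that ``existence is ensured by the linear problems of Corollary~\ref{cor:psi-phi}'' does not work: the compatibility computation you carried out for $\bOm[\bom,\bom^*]$ relies on $\bom$ and $\bom^*$ satisfying \eqref{eq:lin-dKP} and \eqref{eq:lin-dKP*}, whereas $\bphi=\tau\bpsi$ and $\bphi^*=\tau\bpsi^*$ satisfy the bilinear relations \eqref{eq:lin-bil}--\eqref{eq:ad-lin-bil}, which are \emph{not} of the form needed to make $\Delta_i\bOm[\bom,\bphi^*]=\bom\otimes\bphi^*_{(i)}$ compatible. If you want to run this argument, work with $\bpsi,\bpsi^*$ and the formulas \eqref{eq:Psi-tr}--\eqref{eq:Psi*-tr}; then the logic you describe (transformed pair solves the transformed linear problem, hence $\tilde\tau$ satisfies Hirota as its compatibility condition) is indeed a valid alternative proof.
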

\begin{Rem}
The binary Darboux transformation provides a symmetry of the Hirota equation. Its infi\-ni\-te\-si\-mal version on the level of the KP hierarchy is provided by the squared eigenfunction symmetry.
\end{Rem}
\begin{Cor} \label{cor:det-Om}
We will need the following consequence of equations \eqref{eq:omega-p-p}:
\begin{equation} \label{eq:ev-det-Om}
\left( \det \boldsymbol{\Omega}
[\boldsymbol{\omega},\boldsymbol{\omega}^*] \right)_{(i)} = \det \boldsymbol{\Omega}
[\boldsymbol{\omega},\boldsymbol{\omega}^*] \left( 1 + \boldsymbol{\omega}^*_{(i)}  
\boldsymbol{\Omega}
[\boldsymbol{\omega},\boldsymbol{\omega}^*]^{-1} \boldsymbol{\omega} \right) .
\end{equation}
\end{Cor}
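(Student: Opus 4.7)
The plan is to read equation \eqref{eq:omega-p-p} as a rank-one update of $\boldsymbol{\Omega}[\boldsymbol{\omega},\boldsymbol{\omega}^*]$. Since $\Delta_i$ is the forward partial difference, the defining relation rewrites as
\begin{equation*}
\boldsymbol{\Omega}_{(i)} = \boldsymbol{\Omega} + \boldsymbol{\omega}\otimes\boldsymbol{\omega}^*_{(i)},
\end{equation*}
so the shifted potential differs from $\boldsymbol{\Omega}$ by a single rank-one operator. I would then invoke the matrix determinant lemma (Weinstein--Aronszajn identity): for an invertible linear operator $A$ and vectors $u$, $v^*$,
\begin{equation*}
\det(A + u\otimes v^*) = (\det A)\bigl(1 + v^* A^{-1} u\bigr).
\end{equation*}
Substituting $A = \boldsymbol{\Omega}$, $u = \boldsymbol{\omega}$, $v^* = \boldsymbol{\omega}^*_{(i)}$, and using $(\det\boldsymbol{\Omega})_{(i)} = \det\boldsymbol{\Omega}_{(i)}$ yields the claimed formula \eqref{eq:ev-det-Om} directly.

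There is essentially no substantive obstacle here; the only delicate point is making sense of $\det\boldsymbol{\Omega}$ and $\boldsymbol{\Omega}^{-1}$ in the chosen ambient category, which may be either a finite-dimensional $\VV$ or a suitable (e.g.\ trace-class plus identity) setting if $\VV$ is infinite-dimensional. But this is already implicit in the statement of Theorem~\ref{th:vect-Darb-KP} through the use of $\det\boldsymbol{\Omega}$ in \eqref{eq:hat-tau-dKP}, so no additional hypothesis is needed beyond what is already in force. In short, Corollary~\ref{cor:det-Om} is the matrix determinant lemma applied to the rank-one shift equation \eqref{eq:omega-p-p}, and this one-line derivation is the proof I would write.
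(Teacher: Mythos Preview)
Your proof is correct and is essentially the paper's own argument: the paper does not spell out a proof of Corollary~\ref{cor:det-Om} but refers (in the Remark following Corollary~\ref{cor:Om-psi-psi*}) to the bordered determinant formula~\eqref{eq:bord-det}, which when expanded by Schur complement gives exactly the matrix determinant lemma you invoke. The only difference is terminology---``bordered determinant formula'' versus ``Weinstein--Aronszajn identity''---not substance.
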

\begin{Cor} \label{cor:Om-psi-psi*}
Define the potentials $\bOm[\boldsymbol{\psi},\bom^*]$ and $\bOm[\bom, \boldsymbol{\psi}^*]$ by analogs of equations 
\eqref{eq:omega-p-p}, i.e.,
\begin{equation} \label{eq:Om-psi-psi*}
\Delta_i \bOm[\boldsymbol{\psi},\bom^*] = \boldsymbol{\psi} \otimes \bom^*_{(i)}, \qquad
\Delta_i \bOm[\bom, \boldsymbol{\psi}^*] =  \bom \otimes  \boldsymbol{\psi}^*_{(i)}.
\end{equation}
If the potential $\boldsymbol{\Omega}[\bom,\bom^*]$ is invertible, then
\begin{align} \label{eq:Psi-tr}
\tilde{\boldsymbol{\psi}} & = \boldsymbol{\psi} -
\bOm[\boldsymbol{\psi},\bom^*]  \bOm[\bom,\bom^*]^{-1}\bom ,\\
\label{eq:Psi*-tr}
\tilde{\boldsymbol{\psi}}^* & = \boldsymbol{\psi}^* - 
\bom^* \bOm[\bom,\bom^*]^{-1} \bOm[\bom, \boldsymbol{\psi}^*],
\end{align}
provide corresponding solutions of the transformed linear problem and its adjoint.
\end{Cor}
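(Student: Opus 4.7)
The plan is to verify directly that the functions $\tilde{\bpsi}$ and $\tilde{\bpsi}^*$ from \eqref{eq:Psi-tr}--\eqref{eq:Psi*-tr} satisfy the linear problem \eqref{eq:lin-dKP} and its adjoint \eqref{eq:lin-dKP*} with $\tau$ replaced by $\tilde\tau = \tau \det\bOm[\bom,\bom^*]$, which is known to solve the Hirota equation by Theorem~\ref{th:vect-Darb-KP}.

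First, I would rewrite the defining relations \eqref{eq:omega-p-p} and \eqref{eq:Om-psi-psi*} in shifted form, so that a single shift acts as a rank-one perturbation:
\[
\bOm[\bom,\bom^*]_{(i)} = \bOm[\bom,\bom^*] + \bom \otimes \bom^*_{(i)}, \qquad \bOm[\bpsi,\bom^*]_{(i)} = \bOm[\bpsi,\bom^*] + \bpsi \otimes \bom^*_{(i)}.
\]
Multiplying the first identity on the right by $\bOm[\bom,\bom^*]^{-1}\bom$ yields the clean scalar factorization
\[
\bOm[\bom,\bom^*]_{(i)}^{-1} \bom = \frac{\bOm[\bom,\bom^*]^{-1} \bom}{1 + \bom^*_{(i)} \bOm[\bom,\bom^*]^{-1} \bom},
\]
whose denominator is by Corollary~\ref{cor:det-Om} precisely $(\det\bOm[\bom,\bom^*])_{(i)} / \det\bOm[\bom,\bom^*]$. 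This is the bridge that will allow the determinant in $\tilde\tau$ to emerge as the prefactor on the right-hand side of the transformed linear problem.

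Next I would compute $\tilde{\bpsi}_{(i)}$ by shifting \eqref{eq:Psi-tr}, substituting the rank-one updates above, and inverting via the Sherman--Morrison identity. Forming the difference $\tilde{\bpsi}_{(i)} - \tilde{\bpsi}_{(j)}$ and applying the untransformed linear problem \eqref{eq:lin-dKP} to both $\bpsi$ and $\bom$, and the adjoint \eqref{eq:lin-dKP*} to $\bom^*$, the vector pieces should rearrange so that the combination $\bpsi - \bOm[\bpsi,\bom^*]\bOm[\bom,\bom^*]^{-1}\bom = \tilde{\bpsi}$ factors out. The scalar coefficient thereby produced is $\tau\tau_{(ij)}/(\tau_{(i)}\tau_{(j)})$ times a ratio of the form $\det\bOm \cdot \det\bOm_{(ij)} /(\det\bOm_{(i)}\det\bOm_{(j)})$, which is exactly the correction that turns $\tau\tau_{(ij)}/(\tau_{(i)}\tau_{(j)})$ into $\tilde\tau\,\tilde\tau_{(ij)}/(\tilde\tau_{(i)}\tilde\tau_{(j)})$, completing the check of \eqref{eq:Psi-tr}. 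The adjoint claim \eqref{eq:Psi*-tr} is obtained by the mirror-image argument, using the bilinear form \eqref{eq:ad-lin-bil} and the ``left'' version of the Sherman--Morrison formula applied to $\bom^*\bOm[\bom,\bom^*]_{(i)}^{-1}$.

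The main obstacle will be the bookkeeping at the middle step: several rational expressions in the scalars $1 + \bom^*_{(i)}\bOm[\bom,\bom^*]^{-1}\bom$ appear, and reconciling them with $\det\bOm_{(ij)}/\det\bOm$ requires iterating Corollary~\ref{cor:det-Om} one shift at a time (first from the base point to $(i)$, then from $(i)$ to $(ij)$, with the understanding that in the second application the potentials are $\bOm_{(i)}, \bom_{(i)}, \bom^*_{(i)}$). With this careful factorization the cancellations fall into place and the transformed linear problem is recovered.
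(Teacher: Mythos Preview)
Your approach is sound. Note, however, that the paper does not actually prove this corollary: it states the result as background, attributes the proof to \cite{Nimmo-KP}, and in the subsequent Remark indicates only that the mechanism is the bordered determinant identity~\eqref{eq:bord-det}, which lets one write $\tilde\bpsi$ and $\tilde\bpsi^*$ as ratios of determinants. Your Sherman--Morrison computation is the same algebra in different clothing---the rank-one inverse update
\[
(\bOm + \bom\otimes\bom^*_{(i)})^{-1} = \bOm^{-1} - \frac{\bOm^{-1}\bom\otimes\bom^*_{(i)}\bOm^{-1}}{1+\bom^*_{(i)}\bOm^{-1}\bom}
\]
is precisely what the bordered-determinant expansion encodes---so there is no genuine methodological difference, only a choice of language.

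One small point to tighten: in your shift of \eqref{eq:Psi-tr} you will encounter $\bOm_{(i)}^{-1}\bom_{(i)}$, not $\bOm_{(i)}^{-1}\bom$, so the clean scalar factorization you display does not apply directly. You will need either to expand $\bom_{(i)}$ via the linear problem for $\bom$, or (cleaner) to observe that $\bOm_{(i)}^{-1}\bom_{(i)} - \bOm^{-1}\bom$ can itself be computed from the rank-one update and the evolution of $\bom$. This is exactly the ``bookkeeping'' you flag; it goes through, but the two-step iteration of Corollary~\ref{cor:det-Om} you mention (base $\to (i) \to (ij)$) is indeed where the care is needed, and you should expect to use both the linear problem for $\bom$ and the adjoint problem for $\bom^*$ simultaneously at that stage.
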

\begin{Rem}
Recall that in the proof \cite{Nimmo-KP} of Theorem~\ref{th:vect-Darb-KP} and Corollaries~\ref{cor:det-Om}-\ref{cor:Om-psi-psi*} one makes use of the so-called bordered determinant formula~\cite{Hirota-book} 
\begin{equation} \label{eq:bord-det}
\det \left( M^q_p - x^q y^*_p \right)=  \left| \begin{array}{cc} M^q_p & x^q \\ y^*_p & 1 \end{array} \right|
= \left| \begin{array}{cc} \boldsymbol{M} & \boldsymbol{x} \\ \boldsymbol{y}^* & 1 \end{array} \right|,
\end{equation}
where $\boldsymbol{M} = \left( M^q_p  \right)_{p,q=1,\dots ,K} $ is a square matrix, $\boldsymbol{x} = (x^q)_{q=1, \dots,K}$ is a column vector, and $\boldsymbol{y}^* = (y^*_p)_{p=1,\dots,K}$ is a row vector. Then equations \eqref{eq:Psi-tr} and \eqref{eq:Psi*-tr} for scalar functions $\tilde{\boldsymbol{\psi}}$ and $\tilde{\boldsymbol{\psi}}^*$ can be written in the form
\begin{equation*}
\tilde{\boldsymbol{\psi}} = \left| \begin{array}{cc} \bOm[\bom,\bom^*] & \bom \\
\bOm[\boldsymbol{\psi},\bom^*] & \boldsymbol{\psi} \end{array}\right| \cdot \Bigl|  \bOm[\bom,\bom^*] \Bigr|^{-1}, \qquad
\tilde{\boldsymbol{\psi}}^* = \left| \begin{array}{cc} \bOm[\bom,\bom^*] & \bOm[\bom,\boldsymbol{\psi}^*] \\
 \bom^* & \boldsymbol{\psi}^* \end{array}\right| \cdot \Bigl|  \bOm[\bom,\bom^*] \Bigr|^{-1},
\end{equation*} 
used in other parts of the paper.

\end{Rem}
\begin{Cor} \label{cor:t-om}
The function
\begin{equation} \label{eq:bD-om}
\tilde{\boldsymbol{\pi}} = \bOm[\bom,\bom^*]^{-1} \bom
\end{equation}
satisfies also the transformed linear problem. By $\boldsymbol{\pi}$ denote the corresponding solution of the initial linear problem~\eqref{eq:lin-dKP}.
\end{Cor}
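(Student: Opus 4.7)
The plan is to verify by direct computation that $\tilde{\boldsymbol{\pi}} = \bOm^{-1}\bom$ (abbreviating $\bOm \equiv \bOm[\bom,\bom^*]$ and $F \equiv \det\bOm$) satisfies the linear problem \eqref{eq:lin-dKP} with $\tau$ replaced by $\tilde{\tau} = \tau F$ from Theorem~\ref{th:vect-Darb-KP}. Once this is done, the original-frame solution $\boldsymbol{\pi}$ corresponding to $\tilde{\boldsymbol{\pi}}$ is obtained by inverting the vectorial Darboux map \eqref{eq:Psi-tr} of Corollary~\ref{cor:Om-psi-psi*}. To avoid dealing with $\bOm_{(i)}^{-1}$ explicitly, I would multiply the target identity
\[
\bOm_{(i)}^{-1}\bom_{(i)} - \bOm_{(j)}^{-1}\bom_{(j)} = \frac{\tilde{\tau}\tilde{\tau}_{(ij)}}{\tilde{\tau}_{(i)}\tilde{\tau}_{(j)}}\, \bOm^{-1}\bom
\]
from the left by $\bOm_{(i)}$ and verify the two sides separately.

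For the right-hand side, the shift relation \eqref{eq:omega-p-p} gives $\bOm_{(i)} = \bOm + \bom\otimes\bom^*_{(i)}$, so $\bOm_{(i)}\bOm^{-1}\bom = \bom\bigl(1 + \bom^*_{(i)}\bOm^{-1}\bom\bigr) = (F_{(i)}/F)\,\bom$ by Corollary~\ref{cor:det-Om}; substituting $\tilde{\tau} = \tau F$ collapses the right-hand side to $(\tau\tau_{(ij)}F_{(ij)})/(\tau_{(i)}\tau_{(j)}F_{(j)})\,\bom$. For the left-hand side the crucial input is the adjoint equation \eqref{eq:lin-dKP*}, which combined with \eqref{eq:omega-p-p} yields
\[
\bOm_{(i)} - \bOm_{(j)} = -\frac{\tau\tau_{(ij)}}{\tau_{(i)}\tau_{(j)}}\,\bom\otimes\bom^*_{(ij)},
\]
hence $\bOm_{(i)}\bOm_{(j)}^{-1}\bom_{(j)} = \bom_{(j)} - \tfrac{\tau\tau_{(ij)}}{\tau_{(i)}\tau_{(j)}}\,\bom\,(\bom^*_{(ij)}\bOm_{(j)}^{-1}\bom_{(j)})$. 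Applying Corollary~\ref{cor:det-Om} at the $j$-shifted point produces $\bom^*_{(ij)}\bOm_{(j)}^{-1}\bom_{(j)} = F_{(ij)}/F_{(j)} - 1$, while \eqref{eq:lin-dKP} for $\bom$ itself gives $\bom_{(i)} - \bom_{(j)} = \tfrac{\tau\tau_{(ij)}}{\tau_{(i)}\tau_{(j)}}\,\bom$. Combining these three identities the left-hand side reduces to $(\tau\tau_{(ij)}F_{(ij)})/(\tau_{(i)}\tau_{(j)}F_{(j)})\,\bom$ as well, matching the right-hand side.

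The main obstacle is organizational rather than conceptual: the identity looks superficially asymmetric, and the cancellation becomes visible only after jointly using the linear problem for $\bom$, the adjoint problem for $\bom^*$ to express $\bOm_{(i)}-\bOm_{(j)}$ as a rank-one update involving the \emph{double} shift $\bom^*_{(ij)}$, and Corollary~\ref{cor:det-Om} at \emph{both} the base point and the $j$-shift. Viewed this way, Corollary~\ref{cor:t-om} is essentially a compatibility statement tying together Corollary~\ref{cor:det-Om} at neighbouring lattice points with the joint linear--adjoint system for $(\bom,\bom^*)$.
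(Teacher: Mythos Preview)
Your direct verification is correct: the rank-one identities $\bOm_{(i)}=\bOm+\bom\otimes\bom^*_{(i)}$ and $\bOm_{(i)}-\bOm_{(j)}=-\tfrac{\tau\tau_{(ij)}}{\tau_{(i)}\tau_{(j)}}\,\bom\otimes\bom^*_{(ij)}$ together with Corollary~\ref{cor:det-Om} at the base point and at the $j$-shift do collapse both sides to $\tfrac{\tau\tau_{(ij)}F_{(ij)}}{\tau_{(i)}\tau_{(j)}F_{(j)}}\,\bom$, exactly as you wrote.

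The paper, however, does not argue this way. It treats Corollary~\ref{cor:t-om} as an immediate specialization of Corollary~\ref{cor:Om-psi-psi*}: one takes the trivial solution $\boldsymbol{\psi}=\boldsymbol{0}$ of the linear problem, for which the potential $\bOm[\boldsymbol{0},\bom^*]$ is constant and may be normalized to $-\mathbbm{1}$, so that \eqref{eq:Psi-tr} yields $\tilde{\boldsymbol{\psi}}=\boldsymbol{0}-(-\mathbbm{1})\bOm[\bom,\bom^*]^{-1}\bom=\tilde{\boldsymbol{\pi}}$. This one-line derivation is slicker and explains why the statement is labelled a corollary, but it is admittedly ``formal'' (the paper says so) since it rests on a particular choice of integration constant. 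Your computation is longer but entirely self-contained: it never invokes the general transformation formula \eqref{eq:Psi-tr}, and it makes transparent exactly which ingredients---the linear problem for $\bom$, the adjoint problem for $\bom^*$, and the determinant identity at two adjacent sites---are responsible for the result. Either route is acceptable; yours has the advantage of being an honest proof rather than a heuristic.
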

\begin{Rem}
Formula \eqref{eq:bD-om} can be formally obtained from a transformation of the trivial (vector valued) solution $\boldsymbol{\psi} = \boldsymbol{0} \in\VV$ of the linear problem with the constant square matrix $\bOm[\boldsymbol{0},\bom^*] = - \mathbbm{1}$ in \eqref{eq:Psi-tr}. See however Proposition~\ref{prop:pi} of Section~\ref{sec:dim-sources} for another interpretation.
\end{Rem}

Notice that one can formally rewrite transformation formula \eqref{eq:Psi-tr} as
\begin{equation}
\tilde{\boldsymbol{\psi}}  = \boldsymbol{\psi} -
\Delta_i^{-1} (\boldsymbol{\psi} \otimes \bom^*_{(i)} ) \; \tilde{\boldsymbol{\pi}} ,
\end{equation}
which gives a discrete analogue of the squared eigenfunction symmetry (see also~\cite{Zabrodin}, where in the scalar $\dim\mathbb{V}=1$ case it is called the discrete adjoint flow). The general idea of getting equations with self-consistent sources as modification of a usual flow by the squared eigenfunction flow motivates to formulate the following result.
\begin{Th} \label{th:source-binary}
Pick up one direction, $n_i$ say, and modify the $i$th discrete flow by the above binary Darboux tansformation as follows:
\begin{equation} \label{eq:HM-sources}
\tau_{(\tilde{i})} = \tilde{\tau}_{(i)}, \qquad \text{etc.}
\end{equation}
Then in the new discrete coordinates the Hirota system is replaced by the system with self-consistent sources
\begin{equation} \label{eq:Hs-tau}
\tau_{(\tilde{i})} \tau_{(jk)} - \tau_{(j)} \tau_{(\tilde{i}k)} + \tau_{(k)} \tau_{(\tilde{i}j)}  = 
- (\tau \bom^*)_{(jk)} ( \tau \boldsymbol{\pi})_{(\tilde{i})} ,
\qquad i< j < k,
\end{equation}
with 
\begin{align}
\label{eq:Hs-om}
\boldsymbol{\pi}_{(j)} - \boldsymbol{\pi}_{(k)} & = \boldsymbol{\pi} \frac{\tau \tau_{(jk)}}{\tau_{(j)} \tau_{(k)}},\\
\label{eq:Hs-om*}
\bom^*_{(k)} - \bom^*_{(j)} & = \bom^*_{(jk)} \frac{\tau \tau_{(jk)}}{\tau_{(j)} \tau_{(k)}},
\end{align}
while $\boldsymbol{\pi}$ and the transformation datum $\bom$ are related by equation \eqref{eq:bD-om}.
\end{Th}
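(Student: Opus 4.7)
The auxiliary equations~\eqref{eq:Hs-om} and~\eqref{eq:Hs-om*} are immediate: in directions $n_j, n_k$ with $j,k\neq i$ the flow is unchanged by~\eqref{eq:HM-sources}, so these relations reduce to the original linear problem~\eqref{eq:lin-dKP} for $\boldsymbol\pi$ and its adjoint~\eqref{eq:lin-dKP*} for $\bom^*$, both of which hold by hypothesis; the relation $\tilde{\boldsymbol\pi}=\bOm^{-1}\bom$ between $\boldsymbol\pi$ and $\bom$ is the defining equation~\eqref{eq:bD-om} of Corollary~\ref{cor:t-om}. The substance is the bilinear equation~\eqref{eq:Hs-tau}, which the plan is to reduce to Hirota's equation~\eqref{eq:H-M} for $\tau$ and the bilinear adjoint~\eqref{eq:ad-lin-bil} for $\bphi^* = \tau\bom^*$ via one intermediate two-index identity.

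Multiplying the shift formula~\eqref{eq:ev-det-Om} of Corollary~\ref{cor:det-Om} by $\tau\,\tau_{(a)}$ and using $\tilde\tau = \tau\det\bOm$, $\tilde\tau\tilde{\boldsymbol\pi} = \tau\det\bOm\cdot\bOm^{-1}\bom$, together with $\tau_{(a)}\bom^*_{(a)} = (\tau\bom^*)_{(a)}$, one derives the ``two-index Hirota equation with source''
\begin{equation*}
\tau\,\tilde\tau_{(a)} - \tau_{(a)}\,\tilde\tau \;=\; (\tau\bom^*)_{(a)}\,(\tilde\tau\,\tilde{\boldsymbol\pi}),
\end{equation*}
valid for every direction $n_a$. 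Shifting this identity by $n_i$ and rewriting via~\eqref{eq:HM-sources}, namely $\tilde\tau_{(i)}=\tau_{(\tilde i)}$, $\tilde\tau_{(ij)}=\tau_{(\tilde i j)}$, and $(\tilde\tau\tilde{\boldsymbol\pi})_{(i)}=(\tau\boldsymbol\pi)_{(\tilde i)}$, yields
\begin{equation*}
\tau_{(i)}\,\tau_{(\tilde i j)} \;=\; \tau_{(ij)}\,\tau_{(\tilde i)} + (\tau\bom^*)_{(ij)}\,(\tau\boldsymbol\pi)_{(\tilde i)},
\end{equation*}
together with the analogous identity obtained by replacing $j$ by $k$.

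Solving these two relations for $\tau_{(\tilde i j)}$ and $\tau_{(\tilde i k)}$ and inserting into the left-hand side of~\eqref{eq:Hs-tau} splits the result, after extracting a common factor $1/\tau_{(i)}$, into a $\tau_{(\tilde i)}$-piece and a $(\tau\boldsymbol\pi)_{(\tilde i)}$-piece. Hirota's equation~\eqref{eq:H-M} for $\tau$ collapses the first piece to $-\tau_{(\tilde i)}\tau_{(jk)}$, which exactly cancels the explicit $\tau_{(\tilde i)}\tau_{(jk)}$ term on the left-hand side of~\eqref{eq:Hs-tau}. The second piece collapses to $-(\tau\bom^*)_{(jk)}(\tau\boldsymbol\pi)_{(\tilde i)}$ by the three-index adjoint identity
\begin{equation*}
\tau_{(k)}\,\bphi^*_{(ij)} - \tau_{(j)}\,\bphi^*_{(ik)} \;=\; -\tau_{(i)}\,\bphi^*_{(jk)},
\end{equation*}
applied to $\bphi^* = \tau\bom^*$; this identity is derived by eliminating $\bphi^*_{(i)}, \bphi^*_{(j)}, \bphi^*_{(k)}$ from the three pairwise instances $(i,j), (i,k), (j,k)$ of~\eqref{eq:ad-lin-bil}, in direct analogy with the elimination that yields Hirota's equation itself.

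The derivation is essentially algebraic once the two-index identity is in hand, so the main obstacle is the conceptual step of recognising it as the compact repackaging of Corollary~\ref{cor:det-Om} that interfaces cleanly with both the Hirota relation for $\tau$ and the three-index adjoint identity; the remaining manipulation is routine elimination, and the minus sign on the right-hand side of~\eqref{eq:Hs-tau} traces precisely to the minus sign in the three-index adjoint identity above.
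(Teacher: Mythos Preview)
Your proof is correct and follows essentially the same route as the paper's: both reduce the claim to the determinant shift formula of Corollary~\ref{cor:det-Om}, the Hirota equation~\eqref{eq:H-M} for $\tau$, and the adjoint linear problem for $\bom^*$. The paper substitutes $\tau_{(\tilde i\,\cdot)} = \tau_{(i\,\cdot)}(\det\bOm)_{(i\,\cdot)}$ directly into the left-hand side and then invokes~\eqref{eq:ev-det-Om} and~\eqref{eq:lin-dKP*}, whereas you first isolate the two-index identity $\tau\,\tilde\tau_{(a)}-\tau_{(a)}\tilde\tau = (\tau\bom^*)_{(a)}(\tilde\tau\,\tilde{\boldsymbol\pi})$ and combine three instances of~\eqref{eq:ad-lin-bil} into a three-index bilinear relation; these are the same ingredients, only slightly repackaged.
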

\begin{proof}
The left hand side of equations~\eqref{eq:Hs-tau} after using \eqref{eq:hat-tau-dKP} reads
\begin{equation*}
LHS = \tau_{(i)}\tau_{(jk)} (\det \boldsymbol{\Omega})_{(i)} - \tau_{(j)}\tau_{(ik)} (\det \boldsymbol{\Omega})_{(ik)}  + 
\tau_{(k)}\tau_{(ij)} (\det \boldsymbol{\Omega})_{(ij)} ,
\end{equation*}
where we abbreviate $\boldsymbol{\Omega} = \bOm[\bom,\bom^*]$. After making use of the Hirota system \eqref{eq:H-M}, the evolution rule \eqref{eq:omega-p-p} of the potential $\boldsymbol{\Omega}$ and the corresponding evolution rule 
\eqref{eq:ev-det-Om} of its determinant we obtain
\begin{equation*}
LHS = \frac{\tau_{(i)} \tau_{(j)} \tau_{(k)} }{\tau}  \left( 
\frac{ - \tau \tau_{(ik)} }{ \tau_{(i)} \tau_{(k)} } \bom^{*}_{(ik)}  + 
\frac{ \tau \tau_{(ij)} }{ \tau_{(i)} \tau_{(j)} } \bom^{*}_{(ij)} 
\right) \left( (\det \boldsymbol{\Omega}) \; \boldsymbol{\Omega}^{-1} \; \bom \right)_{(i)}.
\end{equation*}
Finally, the adjoint linear problem \eqref{eq:lin-dKP*} satisfied by the transformation datum $\bom^*$, and the transformation rule \eqref{eq:hat-tau-dKP} of the $\tau$-function give the right hand side of equations~\eqref{eq:Hs-tau}. 
\end{proof}
\begin{Rem}
In the formulation of Theorem~\ref{th:source-binary} we assumed that the index $i$ of the modified direction satisfies $i<j<k$. When we modify other index we correspondingly modify also the sign of the right hand side, i.e.
\begin{align} \label{eq:Hs-comp}
\tau_{(i)}\tau_{(\tilde{j}k)} - \tau_{(\tilde{j})}\tau_{(ik)} + \tau_{(k)}\tau_{(i\tilde{j})} & = \; \; 
(\tau \bom^*)_{(ik)}  ( \tau \boldsymbol{\pi})_{(\tilde{j})} ,\\
\tau_{(i)}\tau_{(j\tilde{k})} - \tau_{(j)}\tau_{(i\tilde{k})} + \tau_{(\tilde{k})}\tau_{(ij)} & = 
- (\tau \bom^*)_{(ij)}  ( \tau \boldsymbol{\pi})_{(\tilde{k})} .
\end{align}
Define functions 
\begin{equation} \label{eq:sigma-omega}
\bsigma = \tau \boldsymbol{\pi}, \qquad  \boldsymbol{\rho}^* = \tau \bom^*,
\end{equation}
then taking in equation~\eqref{eq:Hs-comp} $i=1$, $\tilde{j} = 2$ and $k=3$, by Corollary~\ref{cor:psi-phi} we obtain equations 
\eqref{eq:Hirota-sources-123}-\eqref{eq:Hirota-sources-13} of \cite{Hu2006}.  
\end{Rem}
\begin{Rem}
Needless to say, in any triple of non-modified variables we have the Hirota system \eqref{eq:H-M}.
\end{Rem}

\section{The linear problem for the Hirota equation with self-consistent sources}
\label{sec:sources-lin}
In this Section we give the linear problem (and the adjoint linear problem) for the main part of the discrete KP equation with sources. We keep the notation $\tilde{i}$ for the discrete modified flow, but we present its interpretation within the squared eigenfunction (binary Darboux transformation) symmetry only after we checked the postulated form of the linear problem.
\begin{Prop}
Equation \eqref{eq:Hs-tau} of the discrete KP system with sources
is the compatibility condition of the following linear system (assuming $i<j<k$):
\begin{align} \label{eq:lin-s-1}
\bpsi_{(\tilde{i})} - \bpsi_{(j)} & = 
\bpsi \frac{\tau \tau_{(\tilde{i}j)}}{\tau_{(\tilde{i})} \tau_{(j)}} - \bOm[\bpsi,\bom^*]_{(j)} \boldsymbol{\pi}_{(\tilde{i})},  \\
\label{eq:lin-s-2}
\bpsi_{(j)} - \bpsi_{(k)} & = \bpsi \frac{\tau \tau_{(jk)}}{\tau_{(j)} \tau_{(k)}},
\end{align}
with $\bOm[\bpsi,\bom^*]$ given by the following system of compatible equations:
\begin{equation}
\Delta_j \bOm[\bpsi,\bom^*]  =  \bpsi \otimes \bom^*_{(j)} \qquad j\neq i.
\end{equation}
\end{Prop}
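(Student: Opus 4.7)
The plan is to check compatibility by adding the two displayed linear relations and demanding that the result take the same shape as the ``missing'' $(\tilde{i},k)$ source equation. This mirrors the well-known compatibility check for the pure linear problem \eqref{eq:lin-dKP}, where the sum of the $(i,j)$ and $(j,k)$ relations reproduces the $(i,k)$ relation precisely when the Hirota equation \eqref{eq:H-M} holds. First, I would add \eqref{eq:lin-s-1} and \eqref{eq:lin-s-2} to obtain
\begin{equation*}
\bpsi_{(\tilde{i})}-\bpsi_{(k)}=\bpsi\left(\frac{\tau\tau_{(\tilde{i}j)}}{\tau_{(\tilde{i})}\tau_{(j)}}+\frac{\tau\tau_{(jk)}}{\tau_{(j)}\tau_{(k)}}\right)-\bOm[\bpsi,\bom^*]_{(j)}\boldsymbol{\pi}_{(\tilde{i})},
\end{equation*}
and equate this with the expected $(\tilde{i},k)$ form $\bpsi\,\tau\tau_{(\tilde{i}k)}/(\tau_{(\tilde{i})}\tau_{(k)})-\bOm[\bpsi,\bom^*]_{(k)}\boldsymbol{\pi}_{(\tilde{i})}$.

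To handle the operator-valued part on the right-hand side, I would invoke the defining relation $\Delta_\ell\bOm[\bpsi,\bom^*]=\bpsi\otimes\bom^*_{(\ell)}$ for $\ell=j,k$, combined with the adjoint linear problem \eqref{eq:Hs-om*}, to write
\begin{equation*}
\bOm[\bpsi,\bom^*]_{(j)}-\bOm[\bpsi,\bom^*]_{(k)}=-\bpsi\otimes\bom^*_{(jk)}\frac{\tau\tau_{(jk)}}{\tau_{(j)}\tau_{(k)}}.
\end{equation*}
Contracting with $\boldsymbol{\pi}_{(\tilde{i})}$ produces the scalar $\bom^*_{(jk)}\boldsymbol{\pi}_{(\tilde{i})}$ multiplying $\bpsi$, so the entire compatibility condition collapses to an overall scalar identity with $\bpsi$ as a common factor. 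Stripping $\bpsi$ (legitimate because the linear problem must admit a generic family of solutions) and multiplying through by $\tau_{(\tilde{i})}\tau_{(j)}\tau_{(k)}/\tau$ to clear denominators collects the left-hand side into $\tau_{(\tilde{i})}\tau_{(jk)}-\tau_{(j)}\tau_{(\tilde{i}k)}+\tau_{(k)}\tau_{(\tilde{i}j)}$ and the right-hand side into $-(\tau\bom^*)_{(jk)}(\tau\boldsymbol{\pi})_{(\tilde{i})}$, thereby recovering \eqref{eq:Hs-tau}.

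Before the main comparison I would also verify that the potential $\bOm[\bpsi,\bom^*]$ is itself well-defined, i.e.\ that the cross-compatibility $\Delta_k\Delta_j\bOm=\Delta_j\Delta_k\bOm$ holds; this reduces to the identity $(\bpsi_{(k)}-\bpsi_{(j)})\otimes\bom^*_{(jk)}=\bpsi\otimes(\bom^*_{(j)}-\bom^*_{(k)})$, which follows at once from \eqref{eq:lin-s-2} and \eqref{eq:Hs-om*}, in direct analogy with the standard check for the binary potentials of Theorem~\ref{th:vect-Darb-KP}. The main subtlety I expect is not algebraic but conceptual: the ``cancellation of $\bpsi$'' step is justified only under the interpretation that a compatibility condition is an identity on the coefficients that has to hold for every solution of the linear problem. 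Once that is accepted, the calculation is a single line of reduction, with the key input being the adjoint equation \eqref{eq:Hs-om*} for $\bom^*$ that converts the $\bOm$-difference into a closed scalar source term.
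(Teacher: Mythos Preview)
Your argument is correct and is essentially the same as the paper's proof: the paper subtracts from \eqref{eq:lin-s-1} its copy with $j\to k$ and then adds \eqref{eq:lin-s-2}, so that the left-hand side vanishes identically and the right-hand side, after using $\Delta_\ell\bOm[\bpsi,\bom^*]=\bpsi\otimes\bom^*_{(\ell)}$ together with \eqref{eq:Hs-om*}, yields \eqref{eq:Hs-tau}. Your version merely reorders the same three ingredients (you add \eqref{eq:lin-s-1} and \eqref{eq:lin-s-2} first and then compare with the $(\tilde{i},k)$ copy of \eqref{eq:lin-s-1}), and your extra verification of the cross-compatibility of the defining system for $\bOm[\bpsi,\bom^*]$ is a welcome addition that the paper leaves implicit.
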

\begin{proof}
Substract from equation \eqref{eq:lin-s-1} its version with index $j$ replaced by $k$, and add equation~\eqref{eq:lin-s-2}. Then the left hand side vanishes, and the right hand side, after making use of the definition of $\bOm[\bpsi,\bom^*]$ and equation~\eqref{eq:Hs-om*} gives equation \eqref{eq:Hs-tau}.
\end{proof}
\begin{Cor}
The adjoint linear problem reads
\begin{align} \label{eq:a-lin-s-1}
\bpsi^*_{(j)} - \bpsi^*_{(\tilde{i})} & = 
\bpsi^*_{(\tilde{i}j)} \frac{\tau \tau_{(\tilde{i}j)}}{\tau_{(\tilde{i})} \tau_{(j)}} + 
\bom^*_{(j)} \bOm[\boldsymbol{\pi},\bpsi^*]_{(\tilde{i})} ,\\
\label{eq:a-lin-s-2}
\bpsi^*_{(k)} - \bpsi^*_{(j)} & = \bpsi^*_{(jk)} \frac{\tau \tau_{(jk)}}{\tau_{(j)} \tau_{(k)}},
\end{align}
with $\bOm[\boldsymbol{\pi},\bpsi^*]$ given by the following system of compatible equations
\begin{equation}
\Delta_j \bOm[\boldsymbol{\pi},\bpsi^*] =  \boldsymbol{\pi} \otimes \bpsi^*_{(j)}, \qquad j\neq i.
\end{equation}
\end{Cor}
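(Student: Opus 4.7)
The plan is to verify that \eqref{eq:Hs-tau} is the compatibility condition of the adjoint system \eqref{eq:a-lin-s-1}-\eqref{eq:a-lin-s-2}, in parallel with the proof of the preceding Proposition. A new feature here is that in \eqref{eq:a-lin-s-1} the unknown $\bpsi^*$ appears on the right-hand side at the shifted position $(\tilde{i}j)$. As a consequence, the direct analogue of the Proposition's edge-level manipulation turns out to be identically satisfied (after \eqref{eq:Hs-om*} is used) as a consequence of \eqref{eq:a-lin-s-1}-\eqref{eq:a-lin-s-2} alone, so compatibility must be probed one level higher, at the vertex $(\tilde{i}jk)$ of the elementary cube---exactly as one has to do already for the standard adjoint problem \eqref{eq:lin-dKP*} of the Hirota equation.

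First I would shift \eqref{eq:a-lin-s-1} by $k$, shift its $j\leftrightarrow k$ counterpart by $j$, and shift \eqref{eq:a-lin-s-2} by $\tilde{i}$. These three equations express the $\bpsi^*$-differences along the three lattice edges meeting at $(\tilde{i}jk)$; each difference is replaced by a $\tau$-ratio times $\bpsi^*_{(\tilde{i}jk)}$, plus (for the first two) a source contribution of the form $\bom^*_{(jk)}$ multiplied by $\bOm[\boldsymbol{\pi},\bpsi^*]$ evaluated at $(\tilde{i}k)$ or $(\tilde{i}j)$. The combination (second shifted equation) $-$ (first) $-$ (third) then telescopes the left-hand side to zero.

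Next I would reduce the resulting right-hand side. Using the defining rule $\Delta_l\bOm[\boldsymbol{\pi},\bpsi^*]=\boldsymbol{\pi}\otimes\bpsi^*_{(l)}$ at the base-point $(\tilde{i})$ for $l=j,k$ (both different from $i$), the difference $\bOm[\boldsymbol{\pi},\bpsi^*]_{(\tilde{i}j)}-\bOm[\boldsymbol{\pi},\bpsi^*]_{(\tilde{i}k)}$ reduces to $\boldsymbol{\pi}_{(\tilde{i})}\otimes(\bpsi^*_{(\tilde{i}j)}-\bpsi^*_{(\tilde{i}k)})$, and the $\tilde{i}$-shift of \eqref{eq:a-lin-s-2} applied once more turns this into a $\tau$-ratio multiple of $\bpsi^*_{(\tilde{i}jk)}$. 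The whole identity thus becomes proportional to the arbitrary function $\bpsi^*_{(\tilde{i}jk)}$; the vanishing of its scalar coefficient, after clearing a common $\tau$-denominator, is exactly \eqref{eq:Hs-tau}, the source term $-(\tau\bom^*)_{(jk)}(\tau\boldsymbol{\pi})_{(\tilde{i})}$ arising from the product $\bom^*_{(jk)}\boldsymbol{\pi}_{(\tilde{i})}$.

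I expect the main obstacle to be recognising that the compatibility check must be promoted to the $(\tilde{i}jk)$ vertex rather than carried out at the edge level of the preceding Proposition; once this is understood, the remaining algebra---three lattice shifts of the given equations, one use of the defining rule for $\bOm[\boldsymbol{\pi},\bpsi^*]$, and a further use of \eqref{eq:a-lin-s-2}---is routine and essentially no longer than the Proposition's proof. In particular, no separate convention for the $\tilde{i}$-shift of $\bOm[\boldsymbol{\pi},\bpsi^*]$ is needed, since only its $j$- and $k$-differences at the base-point $(\tilde{i})$ enter the final reduction.
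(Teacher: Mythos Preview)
Your proposal is correct and follows essentially the same route as the paper's own proof: the paper likewise shifts \eqref{eq:a-lin-s-1} in $k$, subtracts its $j\leftrightarrow k$ counterpart (shifted in $j$), adds \eqref{eq:a-lin-s-2} shifted in $\tilde{i}$, and then uses the defining relation for $\bOm[\boldsymbol{\pi},\bpsi^*]$ together with a second application of \eqref{eq:a-lin-s-2} to extract \eqref{eq:Hs-tau}. Your explicit remark that the compatibility must be checked at the vertex $(\tilde{i}jk)$ rather than at the edge level (as in the preceding Proposition) is a helpful clarification the paper leaves implicit.
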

\begin{proof}
Substract from equation \eqref{eq:a-lin-s-1} shifted in $k$ its version with $j$ and $k$ exchanged, and add equation~\eqref{eq:a-lin-s-2} shifted in $\tilde{i}$. Then the left hand side vanishes, and the right hand side, after making use of the definition of $\bOm[\boldsymbol{\pi},\bpsi^*]$ and equation \eqref{eq:a-lin-s-2} once again gives equation \eqref{eq:Hs-tau}.
\end{proof}
\begin{Rem}
In checking the compatibility conditions above we didn't use equations \eqref{eq:Hs-om}. However, if we substract from equation \eqref{eq:lin-s-1} shifted in $k$ its version with $j$ and $k$ exchanged, and add equation~\eqref{eq:lin-s-2} shifted 
in $\tilde{i}$, then we obtain equation \eqref{eq:Hs-tau} but using this time also equations \eqref{eq:Hs-om}.  
\end{Rem}

Let us derive  \eqref{eq:lin-s-1}, the novel  part of the linear system for equations \eqref{eq:Hs-om} which is different from  \eqref{eq:lin-dKP}, starting from the binary Darboux transformation \eqref{eq:Psi-tr} of $\psi$. After some calculation using formula \eqref{eq:ev-det-Om} we obtain
\begin{equation}
\tilde{\bpsi}_{(i)} - \bpsi_{(j)} = \bpsi \frac{\tau \tau_{(ij)}}{\tau_{(i)} \tau_{(j)}}
\left( \frac{(\det \bOm[\bom,\bom^*])_{(j)}}{\det \bOm[\bom,\bom^*]}  \right)_{(i)} - 
\bOm [\psi,\bom^*]_{(j)} \tilde{\boldsymbol{\pi}}_{(i)},
\end{equation}
which after using transformation formula \eqref{eq:hat-tau-dKP} and identification~\eqref{eq:HM-sources} is identical 
with equation~\eqref{eq:lin-s-1}.

To derive equation \eqref{eq:a-lin-s-1} starting from the binary Darboux transformation \eqref{eq:Psi*-tr} of $\boldsymbol{\psi}^*$ is only slightly more involved. The transformation formula \eqref{eq:Psi*-tr} leads directly to
\begin{equation} \label{eq:lp-a-s-t}
\bpsi^*_{(j)} - \tilde{\bpsi}^*_{(i)} = \tilde{\bpsi}^*_{(ij)} \frac{\tau \tau_{(ij)}}{\tau_{(i)} \tau_{(j)}} +
\bom^*_{(i)} \bOm[\bom,\bom^*]^{-1}_{(i)} \bOm[\bom,\bpsi^*]^{-1}_{(i)}  + 
\bom^*_{(ij)} \bOm[\bom,\bom^*]^{-1}_{(ij)} \bOm[\bom,\bpsi^*]^{-1}_{(ij)}
\frac{\tau \tau_{(ij)}}{\tau_{(i)} \tau_{(j)}}  .
\end{equation}
One can notice that
\begin{equation*}
\Delta_j \left( \bOm[\bom,\bom^*]^{-1} \bOm[\bom,\bpsi^*] \right) =
\bOm[\bom,\bom^*]^{-1} \bom \otimes \left( \boldsymbol{\psi}^* - 
\bom^* \bOm[\bom,\bom^*]^{-1} \bOm[\bom, \boldsymbol{\psi}^*] \right)_{(j)} ,
\end{equation*}
which due to equation~\eqref{eq:bD-om} allows for the interpretation 
\begin{equation}
\bOm[\bom,\bom^*]^{-1} \bOm[\bom,\bpsi^*] = \bOm[\tilde{\boldsymbol{\pi}} ,\tilde{\bpsi}^*] =
\tilde{\bOm}[\boldsymbol{\pi},\bpsi^*] .
\end{equation}
Then equation~\eqref{eq:lp-a-s-t} can be transformed, using equation \eqref{eq:ev-det-Om} and properties of $\bom^*$, into
\begin{equation}
\bpsi^*_{(j)} - \tilde{\bpsi}^*_{(i)} = \tilde{\bpsi}^*_{(ij)} \frac{\tau \tau_{(ij)}}{\tau_{(i)} \tau_{(j)}} 
\left( \frac{(\det \bOm[\bom,\bom^*])_{(j)}}{\det \bOm[\bom,\bom^*]}  \right)_{(i)} +
\bom^*_{(j)} \tilde{\bOm}[\boldsymbol{\pi},\bpsi^*]_{(i)} ,
\end{equation}
which after using formula \eqref{eq:hat-tau-dKP} and identification~\eqref{eq:HM-sources} is identical 
with equation~\eqref{eq:a-lin-s-1}.

\section{The discrete KP equation with sources from the standard Hirota system}
\label{sec:dim-sources}
It is known that in the system of Hirota equations~\eqref{eq:H-M} the number of independent variables can be arbitrarily large. Moreover, as in other fundamental discrete integrable systems, there is essentially no difference between a Darboux-type transformation and a step into an additional dimension in the parameter space. Such an observation for certain integrable systems like the discrete Darboux equations~\cite{TQL}, and the lattice potential modified Korteweg--de~Vries equation~\cite{FWN-Capel-KdV}, is one of roots of the present-day approach to integrability of discrete systems as the so-called multidimensional consistency~\cite{ABS,FWN-cons}. In the theory, the fundamental possibility of extending the number of independent variables of a given nonlinear system by adding its copies in different directions restates the Bianchi superposition principle for Darboux transformations.

Having shown that the discrete KP equation with sources can be interpreted within the squared eigenfunction symmetry (binary Darboux transformation) approach, the idea of embedding the equation into the Hirota system in an appropriate large number of dimensions is very natural. However some details still have to be worked out.

Let us split independent variables in the system of Hirota equations~\eqref{eq:H-M} into two parts:
\begin{enumerate}
\item the evolution variables with indices $i,j,k$, from $1$ to $N$, and
\item the transformation variables of the first type with indices $a$ from $N+1$ to $N + K$, and of the second type with indices $b$ from $N+K+1$ to $N+2K$.
\end{enumerate}
The order of parameters introduced above is a consequence of signs in the Hirota system~\eqref{eq:H-M} which is going to produce correct signs in the linear problems and other ingredients of the binary Darboux transformation. It is known from both algebraic~\cite{Nimmo-KP} and geometric~\cite{Dol-WCR-Hirota} considerations that the Darboux transformation corresponds to a forward shift in a discrete variable, while the adjoint Darboux transformation (which may be considered as an inverse of the Darboux transformation) corresponds to a backward shift.
By the duality~\cite{Saito-Saitoh,Saitoh-Saito} between the linear problems and the Hirota equation itself, which we already described in Corollary~\ref{cor:psi-phi} one obtains the following result.
\begin{Prop}
Define a row-vector $\bom^*$ and a column-vector $\bom$ with $K$ components
\begin{equation} \label{eq:om-om*-def-tau}
\omega^*_p = \frac{\tau_{(-a_p)}}{\tau}, \qquad \omega^q = \frac{\tau_{(b_q)} }{\tau}, 
\qquad  \text{where} \quad 1\leq p,q \leq K, \quad p = a_p - N, \quad q = b_q - N - K. 
\end{equation}
Then
\begin{enumerate}
\item the Hirota system for two evolution variables and one transformation variable takes the form of the linear system~\eqref{eq:lin-dKP} satisfied by $\bom$, and its adjoint~\eqref{eq:lin-dKP*} satisfied by $\bom^*$; 
\item the Hirota system for one evolution variable and two transformation variables of different types compared with equation
\eqref{eq:omega-p-p} allows for identification of the transformation potential matrix elements as follows:
\begin{equation} \label{eq:Om-pq-def-tau}
\Omega^q_p = \Omega[\omega^q,\omega^*_p] = \frac{\tau_{(-a_p,b_q)}}{\tau} .
\end{equation}
\end{enumerate}
\end{Prop}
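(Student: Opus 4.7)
The plan is to verify both parts of the proposition by applying the Hirota equation~\eqref{eq:H-M} to appropriate triples of indices drawn from the evolution and transformation blocks, and then reading off the resulting bilinear identities through the definitions~\eqref{eq:om-om*-def-tau} and~\eqref{eq:Om-pq-def-tau}. The only technical device I need is that shifts with negative index, such as $\tau_{(-a_p)}$, can be produced from~\eqref{eq:H-M} by applying the shift operator in direction $-a_p$ globally to a Hirota identity containing $\tau_{(a_p)}$: several pairs of opposite shifts then cancel, leaving an identity that involves only $\tau_{(-a_p,\cdots)}$ terms.

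For part (1), fix evolution indices $i<j\leq N$. First take the transformation variable $b_q$ of the second type, so that $i<j<b_q$. Writing~\eqref{eq:H-M} on the triple $(i,j,b_q)$ and dividing by $\tau_{(i)}\tau_{(j)}$ produces directly the linear problem~\eqref{eq:lin-dKP} satisfied by $\omega^q = \tau_{(b_q)}/\tau$. Next take $a_p$ of the first type; apply~\eqref{eq:H-M} on $(i,j,a_p)$ and then shift the whole identity by $-a_p$. After dividing by $\tau_{(i)}\tau_{(j)}$, the three ratios $\tau_{(-a_p,i)}/\tau_{(i)}$, $\tau_{(-a_p,j)}/\tau_{(j)}$ and $\tau_{(-a_p,ij)}/\tau_{(ij)}$ are recognized as the appropriate shifts of $\omega^*_p$, and the adjoint linear problem~\eqref{eq:lin-dKP*} emerges.

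For part (2), fix an evolution index $i\leq N$ together with $a_p$ of the first type and $b_q$ of the second type, so that $i<a_p<b_q$. Apply~\eqref{eq:H-M} on $(i,a_p,b_q)$ and shift by $-a_p$; the resulting identity involves only $\tau$, $\tau_{(i)}$, $\tau_{(b_q)}$, $\tau_{(-a_p,i)}$, $\tau_{(-a_p,b_q)}$ and $\tau_{(-a_p,i,b_q)}$. Dividing by $\tau\tau_{(i)}$, the three ratios are precisely $\omega^q\,\omega^*_{p,(i)}$, $\Omega^q_{p,(i)}$ and $\Omega^q_p$, so the identity collapses to $\Omega^q_{p,(i)} - \Omega^q_p = \omega^q\,\omega^*_{p,(i)}$, which is the componentwise form of~\eqref{eq:omega-p-p}.

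Nothing in the calculation is genuinely difficult; the only real subtlety is sign bookkeeping. The ordering $i<j\leq N<a_p<b_q$ and the splitting of transformation variables into a ``first type'' block below a ``second type'' block are chosen precisely so that the alternating signs in~\eqref{eq:H-M} line up with those in~\eqref{eq:lin-dKP}, \eqref{eq:lin-dKP*} and~\eqref{eq:omega-p-p}; any other ordering would force extra signs into the definitions~\eqref{eq:om-om*-def-tau}--\eqref{eq:Om-pq-def-tau}. Compatibility of~\eqref{eq:omega-p-p} across different evolution directions does not require a separate argument, since it follows automatically from the linear problems established in part (1).
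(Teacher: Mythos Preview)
Your proof is correct and follows essentially the same route as the paper. The paper does not give an explicit proof of this Proposition; it merely invokes the duality of Corollary~\ref{cor:psi-phi} between the bilinear linear problems~\eqref{eq:lin-bil}--\eqref{eq:ad-lin-bil} and the Hirota equation itself, which is exactly what your computation spells out: writing the Hirota identity on a mixed triple and dividing by the appropriate product of $\tau$'s is precisely how $\bphi = \tau_{(b_q)}$ (resp.\ $\bphi^* = \tau_{(-a_p)}$) is seen to satisfy~\eqref{eq:lin-bil} (resp.~\eqref{eq:ad-lin-bil}), and hence $\omega^q$, $\omega^*_p$ satisfy~\eqref{eq:lin-dKP},~\eqref{eq:lin-dKP*}. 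Your treatment of part~(2) via the backward-shifted Hirota equation on $(i,a_p,b_q)$ is likewise the direct unpacking of what the paper leaves implicit.
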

The Hirota system in transformation variables gives rise to the Bianchi permutability principle for various Darboux transformations. 
For our needs it is enough to study the additional variable interpretation of superpositions of binary Darboux transformations. Before doing that let us recall a four variable Pl\"{u}cker form~\cite{Shinzawa} of the Hirota equation 
\begin{equation} \label{eq:H-M-sub}
\tau_{(ij)}\tau_{(kl)} - \tau_{(ik)}\tau_{(jl)} + \tau_{(il)}\tau_{(jk)} = 0, \qquad 1\leq i < j < k < l,
\end{equation}
which can be obtained directly from three copies of equation~\eqref{eq:H-M} for triplets $(i,j,k)$, $(i,j,l)$ and $(i,k,l)$. 
\begin{Rem}
Equation \eqref{eq:H-M-sub} was an important step in deriving \cite{Dol-Tampa} the discrete Darboux equations~\cite{BoKo,MQL} from the Hirota system.
\end{Rem}  
The procedure to provide a dictionary between the binary Darboux transformation formulas \eqref{eq:hat-tau-dKP}, \eqref{eq:Psi-tr}-\eqref{eq:Psi*-tr} and the Hirota equation in the transformation variables will involve standard calculations using the bordered determinants technique, as in \eqref{eq:bord-det}. To match with the known equations let us combine the transformation variables into pairs $(a_p,b_p)$, $p=1,\dots ,K$, and then fix the order within each group
\begin{equation*}
a_K < a_{K-1} < \dots < a_1 , \qquad  b_1 < b_2 < \dots < b_K,
\end{equation*}
which we will follow in making use of equation \eqref{eq:H-M-sub}. To make the notation shorter, instead of $\tau_{(-a_p,b_p)}$ we write $\tau_{[p]}$ (this notation extends also to other functions of the discrete variables). Moreover, we write $\tau_{[\overline{K}]}$ instead of $\tau_{[1,2,\dots ,K]}$.

\begin{Prop}
The solution $\tau$ of the Hirota system \eqref{eq:H-M} translated in $K$th order "binary" shift can be expressed in terms of its mixed first order binary shifts $\tau_{(-a_p, b_q)}$ in a form \eqref{eq:hat-tau-dKP} of the $K$th order binary transformation 
\begin{equation} \label{eq:Om-b-shift}
\tau_{[\overline{K}]} = \tau \det (\Omega[\omega^q,\omega^*_p]), \qquad p,q = 1,\dots ,K,
\end{equation}
with the transformation potentials given by \eqref{eq:om-om*-def-tau} and \eqref{eq:Om-pq-def-tau}. 
\end{Prop}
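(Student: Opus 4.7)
The approach is to prove the formula by induction on $K$, using two ingredients: the Desnanot--Jacobi identity for determinants (Dodgson condensation) and the four-variable Pl\"{u}cker form \eqref{eq:H-M-sub} of the Hirota equation. The base case $K=1$ is immediate, since the $1\times 1$ determinant equals $\Omega[\omega^1,\omega^*_1]=\tau_{(-a_1,b_1)}/\tau$, so $\tau \det\Omega = \tau_{(-a_1,b_1)} = \tau_{[\overline 1]}$.

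For the inductive step, the key preliminary remark is that the hypothesis of the proposition is preserved under passage to any increasing sub-sequence of pair indices, because the orderings $a_K<\cdots<a_1$ and $b_1<\cdots<b_K$ restrict to such sub-sequences. Hence the inductive hypothesis identifies any ordered sub-determinant of $(\Omega^q_p)$, even a ``mismatched'' one whose row-set and column-set are different, with a shifted value of $\tau$ divided by $\tau$. I would then apply the Desnanot--Jacobi identity to $(\Omega^q_p)$, deleting the outer rows $1,K$ and outer columns $1,K$, to obtain
\begin{equation*}
\det\Omega \cdot \det\Omega^{1,K;1,K} \;=\; \det\Omega^{1;1}\cdot\det\Omega^{K;K} \;-\; \det\Omega^{1;K}\cdot\det\Omega^{K;1},
\end{equation*}
where $\Omega^{i;j}$ denotes the minor obtained by striking row $i$ and column $j$, and $\Omega^{1,K;1,K}$ the central minor obtained by striking rows $1,K$ and columns $1,K$. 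By the inductive hypothesis each of these five minors equals $\tau_\ast/\tau$ for an explicit multi-shift: e.g.\ $\det\Omega^{1;1}=\tau_{[2,\dots,K]}/\tau$, $\det\Omega^{K;K}=\tau_{[1,\dots,K-1]}/\tau$, while $\det\Omega^{1;K}$ corresponds to the mismatched pairing with $a$-indices $\{2,\dots,K\}$ and $b$-indices $\{1,\dots,K-1\}$.

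Substituting these identifications and clearing denominators, the desired equality $\tau\det\Omega=\tau_{[\overline K]}$ reduces to a single quadratic relation among six shifted $\tau$-values. All six shifts share the common ``middle'' piece $\sum_{p=2}^{K-1}(-a_p+b_p)$; after factoring it out, the relation becomes
\begin{equation*}
\tau'\cdot\tau'_{(-a_1,-a_K,b_1,b_K)} \;=\; \tau'_{(-a_1,b_1)}\tau'_{(-a_K,b_K)} \;-\; \tau'_{(-a_1,b_K)}\tau'_{(-a_K,b_1)},
\end{equation*}
where $\tau'$ denotes $\tau$ with the common middle shift applied. This is precisely \eqref{eq:H-M-sub} written in the four outer directions $a_K<a_1<b_1<b_K$ and then shifted by $(-a_1,-a_K)$, which closes the induction.

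The principal obstacle is the shift bookkeeping: one must match each Desnanot--Jacobi minor against the correct multi-shift of $\tau$ and then verify that the sign pattern produced by the determinantal identity reproduces exactly that of the four-variable Pl\"{u}cker relation. The choice of the extreme (first and last) rows/columns in the condensation, together with the ordering conventions $a_K<\cdots<a_1<b_1<\cdots<b_K$ fixed in the paper, is precisely what makes the signs and shifts compatible.
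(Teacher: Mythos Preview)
Your argument is correct, but it proceeds differently from the paper's proof. Both use induction on $K$ with the four-variable Pl\"{u}cker form \eqref{eq:H-M-sub} as the essential input, yet they deploy it at opposite ends of the argument.

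The paper performs a Schur-complement reduction: it eliminates the last row and column using the corner entry $\Omega^K_K$, obtaining $\det\bOm = \Omega^K_K\,\bigl|\Omega^q_p - \Omega^q_K(\Omega^K_K)^{-1}\Omega^K_p\bigr|$ for $p,q=1,\dots,K-1$. The four-variable Hirota relation is then invoked \emph{entrywise} to recognise each reduced entry as $\Omega^q_{p\,[K]}$, so the whole reduced determinant is a single $[K]$-shift of the $(K{-}1)$-case and one application of the induction hypothesis finishes the job. Your route instead applies Dodgson condensation first, then uses the induction hypothesis \emph{five} times---twice on mismatched minors, which you correctly observe are covered after re-pairing the $a$- and $b$-indices---and finally invokes \eqref{eq:H-M-sub} once on the resulting scalar identity. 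Your approach makes the determinantal/Pl\"{u}cker structure of the identity very transparent, at the cost of needing the stronger form of the induction hypothesis that handles arbitrary (not just diagonal) sub-minors; the paper's approach is leaner, needing only a single induction call and only diagonal pairings.
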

\begin{proof}
The case $K=1$ is just the definition \eqref{eq:Om-pq-def-tau} of the matrix element $\Omega^1_1= \Omega[\omega^1,\omega^*_1]$. To validate the induction step we apply the column elimination technique starting from the right lower-corner element $\Omega^K_K$
\begin{equation*}
\det \bOm = \Omega^K_K \left| \Omega^q_p - \Omega^q_K (\Omega^K_K)^{-1} \Omega^K_p \right| = \frac{1}{\tau} \left( \tau \left| \Omega^q_p \right| \right)_{[K]}, \qquad p,q = 1, \dots , K-1,
\end{equation*}
where we used the four variable Hirota equation \eqref{eq:H-M-sub} for indices $i=a_{K}$, $j=a_p$, $k=b_q$, $l=b_{K}$ written in the form 
\begin{equation}
\Omega^q_{p[K]} = \Omega^q_p - \Omega^q_K (\Omega^K_K)^{-1} \Omega^K_p , \qquad p,q = 1, \dots , K-1.
\end{equation}
\end{proof}

To present the discrete KP equation with self-consistent sources in the additional variable approach
let us uncover the meaning of the solution $\boldsymbol{\pi}$ of the linear problem defined by equation~\eqref{eq:bD-om}. 
\begin{Prop} \label{prop:pi}
Let the $K$-vector $\bom$ and the $K\times K$ matrix $\bOm[\bom,\bom^*]$ be given as in \eqref{eq:om-om*-def-tau} and 
\eqref{eq:Om-pq-def-tau}. Then the components $\pi^{r}$, $r=1,\dots ,K$ of solution $\boldsymbol{\pi}$ of the equation 
\begin{equation}
\boldsymbol{\pi}_{[\overline{K}]} = \bOm[\bom,\bom^*] \bom
\end{equation}
read
\begin{equation} \label{eq:pi-r}
\pi^r = (-1)^{K-r} \frac{\tau_{(a_r)}}{\tau}, \qquad r=1,\dots , K.
\end{equation} 
\end{Prop}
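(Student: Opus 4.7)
My plan is to read the equation in the statement as the linear system $\bOm[\bom,\bom^*]\,\boldsymbol{\pi}_{[\overline{K}]} = \bom$, which is equivalent to $\boldsymbol{\pi}_{[\overline{K}]} = \bOm[\bom,\bom^*]^{-1}\bom$ and consistent with \eqref{eq:bD-om} under the identification of the $K$-th order binary shift $[\overline{K}]$ with the iterated binary Darboux transformation. Cramer's rule then gives, component by component,
\[
\pi^r_{[\overline{K}]} = \frac{\det \hat{\bOm}^{(r)}}{\det \bOm[\bom,\bom^*]},
\]
where $\hat{\bOm}^{(r)}$ is obtained from $\bOm[\bom,\bom^*]$ by replacing its $r$-th column by $\bom$. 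The preceding Proposition supplies $\tau\det\bOm[\bom,\bom^*] = \tau_{[\overline{K}]}$, so the task reduces to establishing the determinantal identity
\[
\tau\det \hat{\bOm}^{(r)} = (-1)^{K-r}\,\tau_{(a_r)[\overline{K}]};
\]
dividing by $\tau_{[\overline{K}]}$ then yields $\pi^r_{[\overline{K}]} = \bigl((-1)^{K-r}\tau_{(a_r)}/\tau\bigr)_{[\overline{K}]}$, which is \eqref{eq:pi-r} after the common $[\overline{K}]$ shift, and the claim follows by translational invariance of the Hirota hierarchy.

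I would prove the displayed determinantal identity by induction on $K$, mirroring the column-elimination argument in the proof of the preceding Proposition. The base $K=1$ is immediate: $\Omega^1_1\pi^1_{[1]} = \omega^1$ reduces to $\tau_{(-a_1,b_1)}\tau_{(b_1)}/(\tau\,\tau_{(-a_1,b_1)}) = \tau_{(b_1)}/\tau$, satisfied by $\pi^1 = \tau_{(a_1)}/\tau$. For the inductive step with $r<K$, I pivot on $\Omega^K_K = \tau_{(-a_K,b_K)}/\tau$, which lies in an untouched column of $\hat{\bOm}^{(r)}$, and apply the four-term Hirota relation \eqref{eq:H-M-sub} for $(i,j,k,l)=(a_K,a_p,b_q,b_K)$ exactly as in that earlier reduction; the $r$-th column evolves analogously via the three-term Hirota equation \eqref{eq:H-M} in the directions $(-a_K,b_q,b_K)$, producing the expected $[K]$-shift on $\omega^q$. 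The result is a $(K-1)\times(K-1)$ matrix of the same type (with column index $r$ still singled out), to which the inductive hypothesis applies after pulling out an overall $[K]$ shift.

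The main obstacle I foresee is twofold. First, the case $r=K$ requires separate handling, since the bottom-right pivot $\Omega^K_K$ has itself been replaced by $\omega^K = \tau_{(b_K)}/\tau$; there I would pivot within the replaced column and use the three-term Hirota equation in the triples $(-a_p,b_q,b_K)$ to carry out the reduction, after which the induction closes uniformly. Second, the sign $(-1)^{K-r}$ must be tracked carefully: it arises from the column-swap parity needed to bring the replaced $r$-th column into a canonical position (effectively moving it past the $K-r$ columns to its right), and it is precisely the fixed orderings $a_K<\dots<a_1$ and $b_1<\dots<b_K$ chosen before the preceding Proposition that guarantee the Hirota-generated signs match the Laplace-expansion signs at every inductive step.
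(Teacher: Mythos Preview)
Your approach---Cramer's rule followed by a Hirota-driven column elimination of the bordered determinant---is precisely the paper's. The only structural difference is the pivot schedule: the paper does not run a single bottom-right induction with a separate $r=K$ branch, but instead first performs $r-1$ \emph{upper-left} eliminations (pivoting successively on $\Omega^1_1,\Omega^2_2,\dots$, using the relations $\omega^q_{[1]}=\omega^q-\Omega^q_1(\Omega^1_1)^{-1}\omega^1$ and $\Omega^q_{p[1]}=\Omega^q_p-\Omega^q_1(\Omega^1_1)^{-1}\Omega^1_p$), bringing the replaced column to the first slot of a $(K-r+1)\times(K-r+1)$ minor, and only then performs $K-r$ lower-right eliminations. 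This two-phase scheme handles every $r$ uniformly and avoids your case split.

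One point needs correction. No column swaps are performed in either argument, so the factor $(-1)^{K-r}$ cannot be column-swap parity. It comes directly from the three-term Hirota relation in the replaced column: with the chosen ordering $a_K<\cdots<a_1<b_1<\cdots<b_K$ one finds
\[
\omega^q-\Omega^q_K(\Omega^K_K)^{-1}\omega^K \;=\; -\,\omega^q_{[K]}\qquad(q<K),
\]
so each lower-right elimination contributes exactly one factor $-1$ from the $\omega$ column, and $K-r$ such steps give $(-1)^{K-r}$. By contrast the upper-left steps give $\omega^q_{[1]}=\omega^q-\Omega^q_1(\Omega^1_1)^{-1}\omega^1$ with \emph{no} sign; this asymmetry (rooted in whether $b_q$ sits before or after $b_\bullet$ in the Hirota triple) is what the fixed ordering is really buying you. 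Your $r=K$ handling (pivot on $\omega^K$) does work, but note that it closes not via the inductive hypothesis for $\boldsymbol{\pi}$ but via the determinant identity \eqref{eq:Om-b-shift} for the $(K{-}1)$-system shifted by $(b_K)$, since $\tau_{(a_K)[\overline{K}]}=\tau_{[\overline{K-1}](b_K)}$.
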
  
\begin{proof}
By Cramer's formula
\begin{equation*}
\pi^r_{[\overline{K}]} = \frac{\det \bOm_r}{\det \bOm},
\end{equation*}
where the matrix $\bOm_r$ is obtained from $\bOm=\bOm[\bom,\bom^*]$ by replacing its $r$-th column by $\bom$. When $r>1$, to find 
$\det \bOm_r$  we extract the left upper-corner $\Omega^1_1$ and we eliminate other elements of the first row to obtain the determinant of the $(K-1)\times (K-1)$ matrix with the $(r-1)$th column consisting of
\begin{equation} \label{eq:om-q-1}
\omega^q_{[1]} = \omega^q - \Omega^q_1 (\Omega^1_1)^{-1} \omega^1, \qquad q=2,\dots ,K,
\end{equation}
and other elements of the form
\begin{equation} \label{eq:Om-q-1}
\Omega^q_{p[1]} = \Omega^q_p - \Omega^q_1 (\Omega^1_1)^{-1} \Omega^1_p, \qquad p, q=2,\dots ,K, \qquad p\neq r.
\end{equation}
Both equations \eqref{eq:om-q-1} and \eqref{eq:Om-q-1} are consequences of the three and four term 
 Hirota equations \eqref{eq:H-M} and \eqref{eq:H-M-sub}, respectively, written in appropriate variables.

Application of such left upper-corner row-reduction procedure $r-1$ times gives
\begin{equation*}
\det \bOm_r = \frac{ \tau_{[\overline{r-1}]}}{\tau}
\det \left( \begin{array}{cccc} \omega^r & \Omega^r_{r+1} & \cdots & \Omega^r_K \\
\vdots & \vdots & & \vdots \\
\omega^K & \Omega^K_{r+1} & \cdots & \Omega^K_K 
    \end{array} \right)_{[\overline{r-1}]}.
\end{equation*}
We then eliminate elements of the last column starting from the right lower-corner element $\Omega^K_K$ to get
\begin{equation*}
\det \left( \begin{array}{cccc} \omega^r & \Omega^r_{r+1} & \cdots & \Omega^r_K \\
\vdots & \vdots & & \vdots \\
\omega^K & \Omega^K_{r+1} & \cdots & \Omega^K_K 
    \end{array} \right) =  \Omega^K_K 
\det \left( \begin{array}{cccc} - \omega^r & \Omega^r_{r+1} & \cdots & \Omega^r_{K-1} \\
\vdots & \vdots & & \vdots \\
- \omega^{K-1} & \Omega^{K-1}_{r+1} & \cdots & \Omega^{K-1}_{K-1} 
    \end{array} \right)_{[K]} ,
\end{equation*}
where we used the follwing consequences of the Hirota equations \eqref{eq:H-M} and \eqref{eq:H-M-sub} 
\begin{align} \label{eq:om-q-K}
\omega^q_{[K]} & = - \omega^q + \Omega^q_K (\Omega^K_K)^{-1} \omega^K, \qquad q=r,\dots ,K-1, \\ 
\label{eq:om-q-p-K}
\Omega^q_{p[K]} & = \Omega^q_p - \Omega^q_K (\Omega^K_K)^{-1} \Omega^K_p, \qquad p = r+1 , \dots , K-1, \quad q=r,\dots ,K-1.
\end{align}
We apply $(K-r)$ times such left lower-corner column-reduction procedure, and we make use of equation~\eqref{eq:Om-b-shift} and of the natural identity
\begin{equation*}
\omega^r = \Omega^r_r \left(  \frac{\tau_{(a_r)}}{\tau} \right)_{[r]},
\end{equation*}
to conclude the proof.
\end{proof}

\begin{Rem}
The analogs of equations \eqref{eq:om-q-1} and \eqref{eq:om-q-K} for the solutions $\bom^*$ of the adjoint linear problem in their additional variables interpretation \eqref{eq:om-om*-def-tau} can be also obtained from the Hirota system \eqref{eq:H-M} and read
\begin{align*}
\omega^*_{p([1]} & = \; \;  \omega^*_p -  \omega^*_{1}(\Omega^{1}_{1})^{-1} \Omega^{1}_p, \qquad p=2, \dots , K,\\
\omega^*_{p([K]} & = -\omega^*_p +  \omega^*_{1} (\Omega^{K}_{K} )^{-1} \Omega^{K}_p, \qquad p=1, \dots , K-1.
\end{align*}
\end{Rem}
By putting the above ingredients together
we obtain the final result.
\begin{Prop}
Consider the Hirota system in $2K + N$ dimensions and define composite flow $\tilde{i}$ modifying the $i$th discrete flow, $1\leq i \leq N$, by the binary shift $[\overline{K}]$. Then, depending on the order of indices $i,j,k$ from $1$ to $N$ we have the discrete KP system with sources as in Theorem~\ref{th:source-binary} (or in the subsequent Remark), where $\bom^*$ and $\boldsymbol{\pi}$ are given in equations \eqref{eq:om-om*-def-tau} and \eqref{eq:pi-r}.
\end{Prop}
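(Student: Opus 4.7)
The plan is to derive the statement as a direct application of Theorem~\ref{th:source-binary}, once the composite flow $\tilde i$ has been recognized as a $K$th-order binary Darboux transformation packaged inside the ambient Hirota system in $N+2K$ dimensions.

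First, I would identify the composite shift with the binary Darboux transformation. Equation~\eqref{eq:Om-b-shift} expresses the $[\overline{K}]$-shift as $\tau_{[\overline{K}]} = \tau \det\bOm[\bom,\bom^*]$, with $\bom$, $\bom^*$ and the matrix elements $\Omega^q_p$ given by \eqref{eq:om-om*-def-tau}--\eqref{eq:Om-pq-def-tau}. Because shifts in the evolution variables $1,\dots,N$ commute with shifts in the transformation variables, the composite flow defined by $\tau_{(\tilde i)} := \tau_{(i,[\overline{K}])}$ equals the shifted binary Darboux transform $\tilde\tau_{(i)}$, which is precisely the identification~\eqref{eq:HM-sources} required by Theorem~\ref{th:source-binary}.

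Second, I would verify that $\bom$, $\bom^*$ and $\boldsymbol{\pi}$ from \eqref{eq:om-om*-def-tau} and \eqref{eq:pi-r} solve the linear problems \eqref{eq:Hs-om}--\eqref{eq:Hs-om*} in the $N$ evolution directions. For $\bom$ and $\bom^*$ this follows by applying the three-term Hirota relation \eqref{eq:H-M} to the triples $(j,k,b_q)$ and $(j,k,a_p)$ respectively, as already recorded in the first part of the proposition preceding \eqref{eq:Om-pq-def-tau}. For $\boldsymbol{\pi}$ the same relation applied to the triple $(j,k,a_r)$ shows that \eqref{eq:pi-r} solves \eqref{eq:Hs-om}, the common factor $(-1)^{K-r}$ dropping out. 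Moreover, Proposition~\ref{prop:pi} guarantees that the $[\overline{K}]$-shifted $\boldsymbol{\pi}$ obeys the Darboux transformation rule~\eqref{eq:bD-om}, so the identification $\tilde{\boldsymbol{\pi}} = \boldsymbol{\pi}_{[\overline{K}]}$ is consistent with the binary Darboux interpretation used in Theorem~\ref{th:source-binary}.

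With all hypotheses of Theorem~\ref{th:source-binary} verified, its conclusion yields \eqref{eq:Hs-tau} for $i<j<k$, while the Remark following the theorem, with its sign-adjusted variants \eqref{eq:Hs-comp}, handles the remaining orderings $j<i<k$ and $j<k<i$. The main obstacle I anticipate is neither the identification nor the application of Theorem~\ref{th:source-binary}, but the sign bookkeeping: the factor $(-1)^{K-r}$ in \eqref{eq:pi-r}, the signs built into the Hirota system \eqref{eq:H-M} by the ordering convention $a_K<\dots<a_1$ and $b_1<\dots<b_K$, and the sign flips when $i$ is not the smallest of $\{i,j,k\}$ must interlock coherently so that the source terms $\mp(\tau\bom^*)_{(\cdot)}(\tau\boldsymbol{\pi})_{(\cdot)}$ in all three orderings reproduce the correct Hirota identities of the ambient system. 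Apart from this careful sign accounting, the proof is essentially an assembly of the earlier results of this section.
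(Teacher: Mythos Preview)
Your proposal is correct and matches the paper's own treatment: the paper offers no explicit proof of this proposition, stating only that ``by putting the above ingredients together we obtain the final result,'' and your plan is precisely this assembly --- identifying $\tau_{(\tilde i)}=\tilde\tau_{(i)}$ via \eqref{eq:Om-b-shift}, checking that $\bom^*$ and $\boldsymbol{\pi}$ from \eqref{eq:om-om*-def-tau} and \eqref{eq:pi-r} satisfy the required linear problems, and invoking Theorem~\ref{th:source-binary} together with its Remark for the three index orderings. Your concern about sign bookkeeping is legitimate but does not indicate a gap; the signs are fixed by the ordering conventions already in place and the paper simply takes their coherence for granted.
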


Finally, we provide the additional variables interpretation of the remaining part of the binary Darboux transformation formulas in the part related to transformations of the wave function 
$\boldsymbol{\psi}$ and the adjoint wave function $\boldsymbol{\psi}^*$. The proofs consist of direct verification.
\begin{Prop}
The functions 
\begin{align} \label{eq:Om-p-def}
\Omega_p = & \Omega [\boldsymbol{\psi}, \omega^*_p ]  = \boldsymbol{\psi}_{(-a_p)} \frac{ \tau_{(-a_p)}}{\tau}, \\
\label{eq:Om-q-def}
\Omega^q = & \Omega [\omega^q, \boldsymbol{\psi}^* ] = - \boldsymbol{\psi}^*_{(b_q)}  \frac{ \tau_{(b_q)}}{\tau} 
\end{align}
are indeed the potentials in the sense of equations \eqref{eq:Om-psi-psi*}, and the evolution of the wave function and its adjoint in the binary Darboux transformation direction reads
\begin{align} \label{eq:Om-q-bin}
\boldsymbol{\psi}_{(-a_p,b_q)} & =  \boldsymbol{\psi}  - \Omega[\boldsymbol{\psi}, \omega^*_p ]  (\Omega^q_p)^{-1}
\omega^q , \\ \label{eq:Om-p-bin}
\boldsymbol{\psi}^*_{(-a_p,b_q)} & =  \boldsymbol{\psi}^*  - \Omega[\omega^q, \boldsymbol{\psi}^* ] (\Omega^q_p)^{-1} 
\omega^*_p .
\end{align}
\end{Prop}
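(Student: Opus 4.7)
The plan is to verify the four claims of the proposition by direct substitution, as the author indicates, relying only on the extended Hirota system~\eqref{eq:H-M} and the associated linear problems \eqref{eq:lin-dKP}--\eqref{eq:lin-dKP*} together with their backward-shifted versions. The guiding observation is that every identity to be checked reduces, after multiplying through by an appropriate $\tau$-ratio, to the linear problem (or its adjoint) applied to a pair of indices in which one is a transformation index and the subsequent equation is then shifted backward in that direction. Since the explicit forms of $\omega^q$, $\omega^*_p$, and $\Omega^q_p$ are already known from \eqref{eq:om-om*-def-tau}--\eqref{eq:Om-pq-def-tau}, the verification is mechanical provided the sign conventions are respected.

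First, I would check the potential property $\Delta_j \bOm[\bpsi,\omega^*_p] = \bpsi \otimes \omega^*_{p,(j)}$ for an evolution index $j$. Substituting $\Omega_p = \bpsi_{(-a_p)} \tau_{(-a_p)}/\tau$ and clearing the common factor $\tau_{(-a_p,j)}/\tau_{(j)}$ reduces the required identity to
\begin{equation*}
\bpsi_{(-a_p, j)} - \bpsi = \frac{\tau_{(-a_p)}\tau_{(j)}}{\tau\,\tau_{(-a_p,j)}}\,\bpsi_{(-a_p)},
\end{equation*}
which is precisely \eqref{eq:lin-dKP} written for the pair $(j,a_p)$ and then backward-shifted in $n_{a_p}$. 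The analogous verification for $\Omega^q = -\bpsi^*_{(b_q)}\tau_{(b_q)}/\tau$ proceeds by clearing $\tau_{(b_q,j)}/\tau_{(j)}$ and recognising the adjoint linear problem \eqref{eq:lin-dKP*} for the pair $(j,b_q)$, the explicit minus sign being exactly what is needed to align the sign in \eqref{eq:lin-dKP*} with the direction of the forward $b_q$-shift.

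Next, for the binary-shift transformation formula \eqref{eq:Om-q-bin}, I would substitute the explicit $\tau$-ratio expressions for $\Omega_p$, $\Omega^q_p$ and $\omega^q$ on the right-hand side; after one telescoping cancellation the claim becomes
\begin{equation*}
\bpsi - \bpsi_{(-a_p, b_q)} = \frac{\tau_{(-a_p)}\tau_{(b_q)}}{\tau\,\tau_{(-a_p,b_q)}}\,\bpsi_{(-a_p)},
\end{equation*}
which is again \eqref{eq:lin-dKP} for the pair $(a_p,b_q)$ backward-shifted in $n_{a_p}$. Formula \eqref{eq:Om-p-bin} follows by the same manipulation starting from \eqref{eq:lin-dKP*} for the pair $(a_p,b_q)$; the minus sign in the definition of $\Omega^q$ flips the direction of the resulting identity to match the adjoint linear problem.

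The main thing to keep straight is the sign bookkeeping dictated by the paper's convention that adjoint Darboux transformations correspond to backward shifts in the $a_p$-directions while ordinary Darboux transformations correspond to forward shifts in the $b_q$-directions. In particular the explicit minus sign in \eqref{eq:Om-q-def} is forced by the asymmetry between \eqref{eq:lin-dKP} and \eqref{eq:lin-dKP*} (in the latter, the shift factor multiplies $\bpsi^*_{(ij)}$ rather than $\bpsi^*$), and it is the only genuinely non-obvious point; once it is installed correctly, every identity reduces, as above, to a single use of the linear problem after a backward shift in $n_{a_p}$. No appeal to the four-term Hirota identity \eqref{eq:H-M-sub} is needed here, because none of the four statements involves more than one transformation pair $(a_p,b_q)$ at a time.
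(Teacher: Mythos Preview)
Your proposal is correct and is precisely the ``direct verification'' the paper announces without spelling out: each of the four identities reduces, after clearing the appropriate $\tau$-ratio, to a single instance of the linear problem \eqref{eq:lin-dKP} or its adjoint \eqref{eq:lin-dKP*} in a pair of indices one of which is a transformation variable, possibly backward-shifted in $n_{a_p}$. The only minor imprecision is that for the potential property of $\Omega^q$ the natural factor to clear is $\tau_{(b_q)}/\tau$ rather than $\tau_{(b_q,j)}/\tau_{(j)}$, but this does not affect the argument.
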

We have therefore constructed basic ingredients of the interpretation of binary transformation of the $\tau$-function and of the wave function $\boldsymbol{\psi}$ and its adjoint $\boldsymbol{\psi}^*$ which are:
\begin{enumerate}
\item elements $\omega^q$ and $\omega^*_p$ of the transformation data $\bom = (\omega^q)_{q=1,\dots ,K}$ and  
$\bom^* = (\omega^*_p)_{p=1,\dots ,K}$ in terms of the elementary shifts and the backward elementary shifts of the $\tau$-function \eqref{eq:om-om*-def-tau};
\item elements $\Omega^q_p$ of the matrix potential $\bOm[\bom,\bom^*] = (\Omega[\omega^q,\omega^*_p])_{p,q=1,\dots ,K}$ in terms of the mixed binary shifts of the $\tau$-function \eqref{eq:Om-pq-def-tau};
\item elements $\Omega_p$ of the matrix potential $\bOm[\boldsymbol{\psi},\bom^*] = (\Omega[\boldsymbol{\psi},\omega^*_p])_{p =1,\dots ,K}$ in terms \eqref{eq:Om-p-def} of the elementary backward shifts of the wave function $\boldsymbol{\psi}$ and of the $\tau$-function;
\item elements $\Omega^q$ of the matrix potential $\bOm[\bom,\boldsymbol{\psi}^*] = (\Omega[\omega^q, \boldsymbol{\psi}^*])_{q =1,\dots ,K}$ in terms \eqref{eq:Om-q-def} of the elementary forward shifts of the wave function $\boldsymbol{\psi}$ and of the $\tau$-function.
\end{enumerate} 

By reductions of the corresponding determinants and using formulas \eqref{eq:Om-q-bin}-\eqref{eq:Om-p-bin} we have the following result, which in conjunction with Section~\ref{sec:sources-lin} provides the additional variables interpretation of the linear problem for the discrete KP equation with sources.
\begin{Prop}
The $K$th order binary shifts of the wave function $\boldsymbol{\psi}$ and of the adjoint wave function $\boldsymbol{\psi}^*$
can be written as follows:
\begin{align} \label{eq:Psi-b-ev}
\boldsymbol{\psi}_{[\overline{K}]}  = \boldsymbol{\psi} -
\bOm[\boldsymbol{\psi},\bom^*]  \bOm[\bom,\bom^*]^{-1}\bom  = &\left| \begin{array}{cc} \bOm[\bom,\bom^*] & \bom \\
\bOm[\boldsymbol{\psi},\bom^*] & \boldsymbol{\psi} \end{array}\right| \cdot \Bigl|  \bOm[\bom,\bom^*] \Bigr|^{-1}, 
\\
\label{eq:Psi*-b-ev}
\boldsymbol{\psi}^*_{[\overline{K}]}  =  \boldsymbol{\psi}^* - 
\bom^* \bOm[\bom,\bom^*]^{-1} \bOm[\bom, \boldsymbol{\psi}^*] = & 
\left| \begin{array}{cc} \bOm[\bom,\bom^*] & \bOm[\bom, \boldsymbol{\psi}^*] \\
 \bom^* & \boldsymbol{\psi}^* \end{array}\right| \cdot \Bigl|  \bOm[\bom,\bom^*] \Bigr|^{-1},
\end{align}
which agrees with the transformation formulas \eqref{eq:Psi-tr}-\eqref{eq:Psi*-tr} provided we interpret the Darboux transformations as shifts in the additional transformation variables.
\end{Prop}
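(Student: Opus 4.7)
The plan is to establish both displayed formulas by induction on $K$, mimicking the column-reduction argument used in Proposition~\ref{prop:pi}. The right-hand equalities in \eqref{eq:Psi-b-ev}--\eqref{eq:Psi*-b-ev}, which express the transformed wave function as a bordered determinant divided by $\det\bOm[\bom,\bom^*]$, are direct instances of the identity \eqref{eq:bord-det}: expand the bordered determinant along the last row or column and extract $\det\bOm[\bom,\bom^*]$ via a Schur complement. So no new work is needed there. All the content is in the first equalities, i.e.\ in the statement that the iterated binary shift $[\overline{K}]$ in the extended $2K+N$ dimensional Hirota system acts on $\bpsi$ and $\bpsi^*$ exactly by the $K$-dimensional vectorial binary Darboux formulas \eqref{eq:Psi-tr}--\eqref{eq:Psi*-tr}.

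For the base case $K=1$ everything degenerates: $\bOm[\bom,\bom^*]$ is the scalar $\Omega^1_1$, the quantities $\bom,\bom^*,\bOm[\bpsi,\bom^*],\bOm[\bom,\bpsi^*]$ reduce to single entries, and the claimed formulas are literally the elementary binary shift identities \eqref{eq:Om-q-bin}--\eqref{eq:Om-p-bin} already obtained in the preceding Proposition. For the induction step I would factor the composite shift as $[\overline{K}]=[\overline{K-1}]\circ[K]$; first I apply one elementary binary shift \eqref{eq:Om-q-bin} to $\bpsi$, to $\bom$, and column-wise to $\bOm[\bpsi,\bom^*]$, and then I invoke the induction hypothesis for $[\overline{K-1}]$. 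The resulting $(K+1)\times(K+1)$ bordered determinant is then reduced to the $K\times K$ bordered form by Schur-complement elimination of the last row and column. The column-reduction identities for the entries $\Omega^q_{p[K]}$ and $\omega^q_{[K]}$ are precisely those derived in the proof of Proposition~\ref{prop:pi} from the four-variable Hirota equation \eqref{eq:H-M-sub}; the analogous identity for the shifted potential $\bOm[\bpsi,\bom^*]_{[K]}$ follows in the same way. The argument for $\bpsi^*$ is dual: one performs a row elimination from the opposite corner and uses the adjoint elementary shift \eqref{eq:Om-p-bin} together with the row-reduction identities of the Remark following Proposition~\ref{prop:pi}.

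The main obstacle I expect is purely organizational rather than conceptual. One must respect the prescribed orderings $a_K<\dots<a_1$ and $b_1<\dots<b_K$, track the alternating signs already visible in \eqref{eq:pi-r} (these will re-emerge here whenever an elimination crosses several columns), and at each step invoke the correct triple or quadruple of Hirota indices $(a_p,b_q,a_K,b_K)$ among the $2K+N$ available directions. Once these conventions are aligned the verification is parallel to the proof of Proposition~\ref{prop:pi}, with $\bpsi,\bpsi^*$ now playing the roles previously played by $\bom,\bom^*$; the agreement with \eqref{eq:Psi-tr}--\eqref{eq:Psi*-tr} of Corollary~\ref{cor:Om-psi-psi*} is then immediate, which completes the additional-variables interpretation of the binary Darboux transformation of the wave function and its adjoint.
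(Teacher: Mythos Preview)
Your proposal is correct and follows essentially the same route as the paper: induction on $K$ with base case \eqref{eq:Om-q-bin}, and the induction step realized by column reduction of the bordered determinant with respect to the pivot $\Omega^K_K$, using the already-derived binary-shift identities \eqref{eq:om-q-K}, \eqref{eq:om-q-p-K} together with the analogous identity $\Omega_{p[K]} = -\Omega_p + \Omega_K(\Omega^K_K)^{-1}\Omega^K_p$ for the $\bpsi$-potential (your ``analogous identity for the shifted potential $\bOm[\bpsi,\bom^*]_{[K]}$''). The sign-tracking concern you flag is real but minor: the minus signs in $\omega^q_{[K]}$ and $\Omega_{p[K]}$ cancel in the determinant, so no $(-1)^{K-r}$ factor survives here as it did in \eqref{eq:pi-r}.
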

\begin{proof}
We will show only the $K$th order binary shift formula for the wave function $\boldsymbol{\psi}$ leaving to the reader analogous proof for $\boldsymbol{\psi}^*$. The case $K=1$ has been shown already in \eqref{eq:Om-q-bin}. To validate the induction step we first notice the following consequence of the linear problem \eqref{eq:lin-dKP}:
\begin{equation} \label{eq:bin-ev-O-p}
\Omega_{p[K]} = - \Omega_p + \Omega_K (\Omega^K_K)^{-1} \Omega^K_{p}, \qquad p=1,\dots ,K-1.
\end{equation}
Then we calculate the determinant in the numerator by the column reduction with respect to the element $\Omega^K_K$, that is
\begin{equation*}
\left| \begin{array}{ccc} \Omega^q_p & \Omega^q_{K} & \omega^q \\
\Omega^{K}_p & \Omega^{K}_{K} & \omega^{K} \\
\Omega_p & \Omega_{K} &  \boldsymbol{\psi} \end{array}\right| = 
\Omega^{K}_{K} \left| \begin{array}{cc} \Omega^q_p & \omega^q \\
\Omega_p & \boldsymbol{\psi} \end{array}\right|_{[K]} , \qquad p,q = 1,\dots ,K-1, 
\end{equation*}
using also equations
\eqref{eq:om-q-K}, \eqref{eq:om-q-p-K}, \eqref{eq:Psi-b-ev} and \eqref{eq:bin-ev-O-p}.
\end{proof}

\section{Conclusion and remarks}
In the paper we presented the discrete KP equation with self-consistent sources \cite{Hu2006} as coming from the (sourceless) discrete KP system of Hirota in multidimensions by suitable cut-off of a subspace in the full discrete variables space. Our approach provides also the corresponding linear problem and its adjoint. An important step in the derivation was the squared-eigenfunction approach to the equation with sources, which we also interpreted in the spirit of the binary Darboux transformations.  

Our result can be considered as a step towards the following research problems:
\begin{enumerate}
\item find the self-consistent source extensions by the squared-eigenfunction approach of other distinguished three dimensional integrable discrete equations and corresponding lattice maps (discrete Darboux equations and the quadrilateral lattice maps~\cite{BoKo,MQL}, discrete BKP equation~\cite{Miwa}, discrete CKP equation~\cite{Kashaev}, quadratic reductions of quadrilateral lattices~\cite{q-red}) using known binary Darboux transformations for these systems;
\item find the self-consistent source extensions of distinguished reductions of the Hirota system like the discrete (modified) Korteweg--de~Vries equation~\cite{FWN-Capel-KdV} or other discrete Gel'fand--Dikii type equations~\cite{FWN-GD};
\item using the known connection (on the continuous level) of the squared eigenfunction symmetry and the so called restricted flows of integrable hierarchies \cite{MarekStefan,Zeng-Li-93,Oevel1993a,Oevel1994} find discrete analogs of the restricted flow equations;
\item use known algebro-geometric or analytic techniques of getting solutions to the Hirota system and its distinguished reductions find corresponding solutions of their extensions with sources;
\item find non-commutative analogs of integrable discrete equations from their known sourceless versions~\cite{Nimmo-NCKP,Dol-Des,Dol-GD}.
\end{enumerate}

We would like to stress that the relation between discrete the KP equation with sources and the standard system of Hirota's discrete KP equations becomes elementary and visible on the level of discrete systems only. The present-day interest in integrable discrete systems is a reflection of the fact that in the course of a continuous limiting procedure which often ``brings artificial complications" \cite{Zabrodin}, various symmetries and relations between 
different discrete systems are lost or hidden. Our paper gives a new 
example supporting this claim, and shows once again the prominent role of 
Hirota's discrete KP equation in integrable
systems theory.

\section*{Acknowledgments}
The authors would like to thank an anonymous reviewer for useful editorial suggestions. The research of A.~D. was supported in part by the Polish Ministry of Science and Higher Education grant No.~N~N202~174739. 
R.~L. was supported in part by National Natural Science Foundation of China (11171175, 11201477).

\bibliographystyle{amsplain}

\end{document}